\documentclass[onecolumn, draftclsnofoot]{IEEEtran}
\usepackage{amsthm}
\usepackage{amsmath}
\usepackage{amsfonts}
\usepackage{graphicx}
\usepackage{latexsym}
\usepackage{amssymb}
\usepackage{stmaryrd}
\usepackage{comment}
\usepackage{amscd}
\usepackage{hyperref}
\newcommand\myshade{70} 
\hypersetup{ 
	linkcolor  = red!\myshade!black,
	citecolor  = blue!\myshade!black,
	urlcolor   = blue!\myshade!black,
	colorlinks = true,
}

\usepackage{tikz}
\usetikzlibrary{
    arrows.meta,
    backgrounds,
    decorations.pathreplacing,
    fit,
    patterns,
    positioning,
    shapes.geometric
}
\tikzset{
  block/.style = {rectangle, draw, fill=blue!10, text centered, minimum height=2em, minimum width=6em},
  arrow/.style = {thick,->,>=stealth}
}

\usepackage{cite}
\usepackage{cleveref}
\usepackage{colortbl}
\usepackage[dvipsnames]{xcolor}
\usepackage[linesnumbered,ruled]{algorithm2e}
\usepackage[noend]{algpseudocode}
\allowdisplaybreaks
\usepackage{graphicx}
\usepackage{subfigure}
\usepackage{wrapfig}
\usepackage{float}
\usepackage{bm}
\usepackage{bbm}
\usepackage{enumerate}

\usepackage{mathtools}
\newcommand{\bea}{\begin{eqnarray}}
\newcommand{\eea}{\end{eqnarray}}
\newcommand{\bean}{\begin{eqnarray*}}
\newcommand{\eean}{\end{eqnarray*}}

\newcommand{\sbinom}[2]{\left( \begin{array}{c} #1 \\ #2 \end{array} \right) }

\newcommand{\cA}{{\cal A}}
\newcommand{\cB}{{\cal B}}
\newcommand{\cC}{{\cal C}}
\newcommand{\cD}{{\cal D}}

\newcommand{\cP}{{\cal P}}
\newcommand{\cQ}{{\cal Q}}
\newcommand{\cR}{{\cal R}}
\newcommand{\cS}{{\cal S}}

\newcommand{\cZ}{{\cal Z}}

\newcommand{\sG}{\script{G}}

\newcommand{\sP}{\script{P}}

\newcommand{\bfm}{{\boldsymbol m}}

\newcommand{\bfx}{{\boldsymbol x}}

\DeclareMathAlphabet{\mathbfsl}{OT1}{cmr}{bx}{it}
\newcommand{\uuu}{\kern-1pt\mathbfsl{u}\kern-0.5pt}
\newcommand{\vvv}{\kern-1pt\mathbfsl{v}\kern-0.5pt}

\newcommand{\myboxplus}{\kern1pt\mbox{\small$\boxplus$}}

\makeatletter \DeclareRobustCommand{\sbinom}{\genfrac[]\z@{}}
\makeatother
\newcommand{\G}[2]{\sbinom{{#1}\kern-1pt}{{#2}\kern-1pt}}
\newcommand{\Gq}[2]{\sbinom{{#1}\kern-0.25pt}{{#2}\kern-0.25pt}}

\newcommand{\Fqn}{\smash{{\mathbb F}_{\!q}^n}}

\newcommand{\Ps}{\smash{{\sP\kern-2.0pt}_q\kern-0.5pt(n)}}
\newcommand{\sPs}{\smash{{\sP\kern-1.5pt}_q(n)}}
\newcommand{\Ptwo}{\smash{{\sP\kern-2.0pt}_2\kern-0.5pt(n)}}
\newcommand{\Ptwom}{\smash{{\sP\kern-2.0pt}_2\kern-0.5pt(m)}}
\newcommand{\Ptwonm}{\smash{{\sP\kern-2.0pt}_2\kern-0.5pt(n+m)}}
\newcommand{\Ptwoa}{\smash{{\sP\kern-2.0pt}_2\kern-0.5pt(1)}}
\newcommand{\Ptwob}{\smash{{\sP\kern-2.0pt}_2\kern-0.5pt(2)}}
\newcommand{\Ptwoc}{\smash{{\sP\kern-2.0pt}_2\kern-0.5pt(3)}}
\newcommand{\Ptwod}{\smash{{\sP\kern-2.0pt}_2\kern-0.5pt(4)}}
\newcommand{\Ptwoe}{\smash{{\sP\kern-2.0pt}_2\kern-0.5pt(5)}}
\newcommand{\Ptwof}{\smash{{\sP\kern-2.0pt}_2\kern-0.5pt(6)}}
\newcommand{\Ptwokm}{\smash{{\sP\kern-2.0pt}_2\kern-0.5pt(2k-1)}}
\newcommand{\Pone}{\smash{{\sP\kern-2.5pt}_2\kern-0.5pt(n{-}1)}}

\newcommand{\Gr}{\smash{{\sG\kern-1.5pt}_q\kern-0.5pt(n,k)}}
\newcommand{\Gi}{\smash{{\sG\kern-1.5pt}_q\kern-0.5pt(n,i)}}
\newcommand{\Gj}{\smash{{\sG\kern-1.5pt}_q\kern-0.5pt(n,j)}}
\newcommand{\Grmk}{\smash{{\sG\kern-1.5pt}_q\kern-0.5pt(n,n-k)}}
\newcommand{\Grdk}{\smash{{\sG\kern-1.5pt}_q\kern-0.5pt(2k,k)}}
\newcommand{\Grekappa}{\smash{{\sG\kern-1.5pt}_q\kern-0.5pt(n,e+1-\kappa)}}
\newcommand{\Grtwoekappa}{\smash{{\sG\kern-1.5pt}_q\kern-0.5pt(n,2e+1-\kappa)}}
\newcommand{\Gremkappa}{\smash{{\sG\kern-1.5pt}_q\kern-0.5pt(n,e-\kappa)}}
\newcommand{\Gn}{\smash{{\sG\kern-1.5pt}_2\kern-0.5pt(n,n{-}1)}}
\newcommand{\Gnq}{\smash{{\sG\kern-1.5pt}_q\kern-0.5pt(n,n{-}1)}}
\newcommand{\Gone}{\smash{{\sG\kern-1.5pt}_2\kern-0.5pt(n,1)}}
\newcommand{\Gqone}{\smash{{\sG\kern-1.5pt}_q\kern-0.5pt(n,1)}}
\newcommand{\GTwo}{\smash{{\sG\kern-1.5pt}_2\kern-0.5pt(n,k)}}
\newcommand{\GTwonk}[2]{{\smash{{\sG\kern-1.5pt}_2\kern-0.5pt({#1},{#2})}}}
\newcommand{\Gnk}{\smash{{\sG\kern-1.5pt}_2\kern-0.5pt(n,n{-}k)}}
\newcommand{\Greone}{\smash{{\sG\kern-1.5pt}_q\kern-0.5pt(n,e{+}1)}}
\newcommand{\Gretwo}{\smash{{\sG\kern-1.5pt}_q\kern-0.5pt(n,e{+}2)}}

\newcommand{\be}[1]{\begin{equation}\label{#1}}
\newcommand{\ee}{\end{equation}}

\newcommand{\Hent}[1]{\mathbf{H}\left(#1\right)}
\newcommand{\Iinfo}[2]{\mathbf{I}\left(#1; #2\right)}
\newcommand{\condH}[2]{\mathbf{H}\left(#1 \mid #2\right)}
\newcommand{\condI}[3]{\mathbf{I}\left(#1; #2 \mid #3\right)}

\newtheorem{theorem}{Theorem}
\newtheorem{lemma}{Lemma}
\newtheorem{remark}{Remark}

\newtheorem{definition}{Definition}
\newtheorem{proposition}{Proposition}

  \newcommand*{\Initial}{I\mskip-2mu}
\newcommand*{\Final}{F\mskip-2mu}
\newcommand*{\In}{n^{\Initial}}
\newcommand*{\Fn}{n^{\Final}}
\newcommand*{\Ik}{k^{\Initial}}
\newcommand*{\Fk}{k^{\Final}}

\newcommand*{\IPart}{{\mathcal{P}}^\Initial}
\newcommand*{\FPart}{{\mathcal{P}}^\Final}
\newcommand*{\Ig}{g^{\Initial}}
\newcommand*{\Fg}{g^{\Final}}
\newcommand*{\Ir}{r^{\Initial}}
\newcommand*{\Fr}{r^{\Final}}
\newcommand*{\Imm}{\mu^{\Initial}}

\newcommand*{\Il}{\lambda^{\Initial}}

\newcommand{\bw}{\gamma_{\mathrm{R}}}

\IEEEoverridecommandlockouts
  
\begin{document}

\author{
    \IEEEauthorblockN{\textbf{Saransh~Chopra}\IEEEauthorrefmark{1},
    \textbf{Shubhransh~Singhvi}\IEEEauthorrefmark{1}, \textbf{K.V.~Rashmi}\thanks{\hspace{-0.275cm}\IEEEauthorrefmark{1} Indicates equal contribution.\\
    This work was supported in part by the NSF CAREER Award under Grant 19434090 and in part by a Sloan Faculty Fellowship.}\\}
    \IEEEauthorblockA{
    Computer Science Department, Carnegie Mellon University, Pittsburgh, PA, USA}\qquad
    \IEEEauthorblockA{
    \\Email: saranshc@cs.cmu.edu, shubhranshsinghvi2001@gmail.com, \ rvinayak@cs.cmu.edu}
}

\title{Bandwidth Cost of Locally Repairable Convertible Codes in the Global Merge Regime}
\date{\today}
 \maketitle

\begin{abstract}
Recent studies have shown that distributed storage systems can achieve significant space savings by adapting redundancy levels to varying disk failure rates. This adaptation is performed via \emph{code conversion}, wherein data encoded under an initial code are transformed to data encoded under a final code. While this process is typically resource-intensive, \emph{convertible codes} are designed to enable these transformations efficiently while preserving desirable decodability constraints such as \emph{repair degree}, or the number of nodes accessed during node repair. 

In this work, we focus on the \emph{bandwidth cost} of conversion, or the total amount of data transferred during the conversion process. We study fundamental limits on the bandwidth cost of conversion between systematic optimal-distance Locally Repairable Codes (LRCs). We restrict our focus to the \emph{global merge regime}, in which multiple initial codewords are combined to form a single final codeword while preserving information locality. We focus on \emph{stable} convertible codes, wherein the number of \emph{unchanged nodes} is maximized during conversion. We generalize an information-theoretic approach for modeling code conversion to the LRC setting, and derive the first non-trivial lower bounds on the bandwidth cost of conversion in this regime. Notably, our bounds do not rely on any linearity assumptions. Consequently, we show that the constructions of Maturana and Rashmi are bandwidth-optimal across a broad range of parameters in the global merge regime.
\end{abstract}

\section{Introduction}

\emph{Erasure codes} are often employed in modern large-scale distributed storage systems (DSSs) to ensure data reliability and high availability while substantially reducing storage overhead compared to replication-based solutions~\cite{patterson1988case, ghemawat2003google, huang2012erasure}. An $(n,k)$ erasure code maps $k$ data symbols to a \emph{codeword} of length $n$, with the codeword symbols stored across distinct storage nodes. In large-scale DSSs, data are typically stored as multiple independently encoded codewords distributed over different subsets of storage devices.  Among all $(n,k)$ erasure codes, \emph{maximum distance separable} (MDS) codes achieve the optimal trade-off between storage overhead and failure tolerance by tolerating any $n-k$ erasures, as dictated by the Singleton bound~\cite{singleton1964maximum}. However, MDS codes incur high \emph{repair costs}, as recovering from any single node failure (erasure) in an MDS-coded DSS requires accessing $k$ other nodes. As repair operations occur frequently in practice, these costs can significantly degrade performance in large-scale DSSs~\cite{rashmi2013facebook,sathiamoorthy2013xoring,rashmi2014hitchhiker,huang2012erasure}. Moreover, there has recently been a growing interest in \emph{wide codes}\cite{kadekodi2020pacemaker,kadekodi2022tiger,kadekodi2023widelrcs}, i.e., codes with large dimension $k$, as they can achieve lower storage overhead for a prescribed level of reliability, further exacerbating repair costs of MDS-coded DSSs. \emph{Locally repairable codes} (LRCs) \cite{gopalan2012locality} mitigate this issue by enabling node repair through access to only a small number of other nodes at the expense of additional storage overhead. A substantial body of work has explored the fundamental limits, constructions, and trade-offs associated with LRCs; we refer the reader to the survey in~\cite{balaji2018erasurecodingoverview} for a comprehensive overview. In this work, we consider a class of systematic LRCs  parameterized by $(k,g,r,\ell)$, wherein each codeword consists of $k$ \emph{information symbols} partitioned into $\frac{k}{r}$ disjoint \emph{local groups} of $r$ information symbols and $\ell$ \emph{local parity symbols} each, along with $g$ \emph{global parity symbols} \cite{maturana2023LRCC}. A schematic illustration of this structure is shown in Figure~\ref{fig:lrc_structure}. In particular, we consider optimal-distance $(k,g,r,\ell)$-LRCs. 

\begin{figure}[t]
\centering
\begin{tikzpicture}[scale=0.9, every node/.style={font=\small}]

\def\r{3}
\def\ell{2}
\def\g{3}
\def\groups{3}

\def\dx{0.85}
\def\gap{.25}

\pgfmathsetmacro{\totalwidth}{\groups*(\r*\dx + \ell*\dx + \gap)}

\tikzstyle{info}=[draw, rectangle, minimum width=0.8cm, minimum height=0.8cm, fill=blue!20]
\tikzstyle{localpar}=[draw, rectangle, minimum width=0.8cm, minimum height=0.8cm, fill=green!30]
\tikzstyle{globalpar}=[draw, rectangle, minimum width=0.8cm, minimum height=0.8cm, fill=red!30]

\foreach \j in {1,...,3} {

    \pgfmathsetmacro{\base}{(\j-1)*(\r*\dx + \ell*\dx + \gap)}

    \foreach \i in {1,...,3} {
        \pgfmathtruncatemacro{\t}{(\j-1)*\r + \i}
        \node[info] at ({\base + (\i-1)*\dx},0) {};
    }

    \foreach \i in {1,...,2} {
        \pgfmathtruncatemacro{\p}{(\j-1)*\ell + \i}
        \node[localpar] at ({\base + (\r+\i-1)*\dx},0) {$\mathrm{L}$};
    }

    \node at ({\base + 2*\dx}, -1) {Local group};
}

\foreach \i in {1,...,3} {
    \node[globalpar] at ({\totalwidth + (\i-1)*\dx}, 0) {$\mathrm{G}$};
}
\node at ({\totalwidth + \dx}, -1) {Global Parities};

\end{tikzpicture}
\caption{A codeword of a $(k = 9, g = 3, r = 3, \ell = 2)$-LRC. Empty boxes denote information symbols. $\mathrm{L}$ and $\mathrm{G}$ denote local parity symbols and global parity symbols, respectively. }
\label{fig:lrc_structure}
\end{figure}
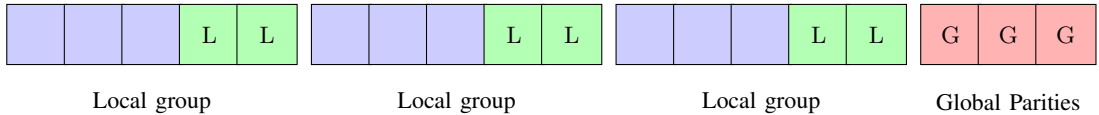

Recent studies\cite{kadekodi2019cluster, kim2024morph} highlight the importance of adapting the parameters of already encoded data to evolving storage device failure rates and changing access patterns. However, updating code parameters via the \emph{default approach} of fully re-encoding the data is prohibitively expensive~\cite{maturana2022convertible}. To address this issue, Maturana and Rashmi~\cite{maturana2022convertible} introduced the \emph{code conversion problem}, which provides a formal framework for efficiently updating the parameters of encoded data. In this framework, multiple codewords of an initial code $\mathcal{C}^I$ are transformed into multiple codewords of a final code $\mathcal{C}^F$, while minimizing the cost of conversion. The central goal is to jointly design the initial and final codes, along with a conversion procedure, such that this transformation can be performed more efficiently than naive re-encoding, subject to decodability constraints on $\mathcal{C}^I$ and $\mathcal{C}^F$, such as repair locality. A pair of codes that admits such a procedure is referred to as a \emph{convertible code}. 

Prior work on convertible codes has primarily focused on two cost metrics of conversion: the \emph{access cost}, defined as the total number of nodes read and written {during the conversion procedure}, and the \emph{bandwidth cost}, defined as the sum of \emph{read bandwidth cost} and \emph{write bandwidth cost}, the total amount of data read from and written to the nodes, respectively, {during the conversion procedure}. The problem of efficiently converting between LRCs was first considered by Maturana and Rashmi~\cite{maturana2023LRCC}. 

Despite the growing importance of LRCs in practical storage systems, the problem of efficiently converting between LRCs with different parameters remain largely unexplored \cite{maturana2023LRCC, Kong2023LocallyRC, ge2026locally, shi2026bounds}. In this work, we investigate the fundamental limits of the bandwidth cost of conversion between optimal-distance $(k,g,r,\ell)$-LRCs. In particular, we focus on the \emph{global merge regime}, in which multiple initial codewords are merged into a single final codeword, the number of global parity symbols is allowed to be updated arbitrarily, and all remaining code parameters are preserved; that is, conversion from an initial optimal-distance $(\Ik,\Ig,r,\ell)$-LRC to a final optimal-distance $(\Il\Ik,\Fg,r,\ell)$-LRC. A schematic illustration of this conversion is shown in Figure~\ref{fig:lrc_global_merge}.

We restrict our focus to \emph{stable} convertible codes, which maximize the number of symbols that remain unchanged during the conversion procedure. For this regime, Maturana and Rashmi proposed constructions~\cite{maturana2023LRCC} for which both the initial and final codes as well as the conversion procedure are linear. 

\begin{figure}[t]
\centering
\resizebox{0.75\linewidth}{!}{
\begin{tikzpicture}[scale=0.9, every node/.style={font=\small}]

\def\r{3}
\def\ell{1}
\def\g{3}
\def\groups{3}

\def\dx{0.85}
\def\gap{.25}
\def\yshift{-2.5}

\pgfmathsetmacro{\totalwidth}{\groups*(\r*\dx + \ell*\dx + \gap)}

\tikzstyle{info}=[draw, rectangle, minimum width=0.8cm, minimum height=0.8cm, fill=blue!20]
\tikzstyle{localpar}=[draw, rectangle, minimum width=0.8cm, minimum height=0.8cm, fill=green!30]
\tikzstyle{globalpar}=[draw, rectangle, minimum width=0.8cm, minimum height=0.8cm, fill=red!30]

\node[align=right] at (-2.5, 0) {\textbf{Initial codeword-1}};
\node[align=right] at (-2.5, \yshift) {\textbf{Initial codeword-2}};

\foreach \j in {1,...,3} {

    \pgfmathsetmacro{\base}{(\j-1)*(\r*\dx + \ell*\dx + \gap)}

    \foreach \i in {1,...,3} {
        \pgfmathtruncatemacro{\t}{(\j-1)*\r + \i}
        \node[info] at ({\base + (\i-1)*\dx},0) {};
    }

    \foreach \i in {1} {
        \pgfmathtruncatemacro{\p}{(\j-1)*\ell + \i}
        \node[localpar] at ({\base + (\r+\i-1)*\dx},0) {$\mathrm{L}$};
    }

}

\foreach \i in {1,...,3} {
    \node[globalpar] at ({\totalwidth + (\i-1)*\dx}, 0) {$\mathrm{G}$};
}

\foreach \j in {4,...,6} {

    \pgfmathsetmacro{\base}{(\j-4)*(\r*\dx + \ell*\dx + \gap)}

    \foreach \i in {1,...,3} {
        \pgfmathtruncatemacro{\t}{9 + (\j-4)*\r + \i}
        \node[info] at ({\base + (\i-1)*\dx}, \yshift) {};
    }

    \foreach \i in {1} {
        \pgfmathtruncatemacro{\p}{\ell*\groups + (\j-4)*\ell + \i}
        \node[localpar] at ({\base + (\r+\i-1)*\dx}, \yshift) {$\mathrm{L}$};
    }

}

\foreach \i in {4,...,6} {
    \node[globalpar] at ({\totalwidth + (\i-4)*\dx}, \yshift) {$\mathrm{G}$};
}

\node[align=right] at (-2, 3*\yshift - 0.55 ) {\textbf{Final codeword}};

\foreach \j in {1,...,3} {

    \pgfmathsetmacro{\base}{(\j-1)*(\r*\dx + \ell*\dx + \gap)}

    \foreach \i in {1,...,3} {
        \pgfmathtruncatemacro{\t}{(\j-1)*\r + \i}
        \node[info] at ({\base + (\i-1)*\dx},3*\yshift) {};
    }

    \foreach \i in {1} {
        \pgfmathtruncatemacro{\p}{(\j-1)*\ell + \i}
        \node[localpar] at ({\base + (\r+\i-1)*\dx},3*\yshift) {$\mathrm{L}$};
    }
}

\foreach \j in {4,...,6} {

    \pgfmathsetmacro{\base}{(\j-4)*(\r*\dx + \ell*\dx + \gap)}

    \foreach \i in {1,...,3} {
        \pgfmathtruncatemacro{\t}{9 + (\j-4)*\r + \i}
        \node[info] at ({\base + (\i-1)*\dx}, 3*\yshift - 1.1) {};
    }

    \foreach \i in {1} {
        \pgfmathtruncatemacro{\p}{\ell*\groups + (\j-4)*\ell + \i}
        \node[localpar] at ({\base + (\r+\i-1)*\dx}, 3*\yshift - 1.1) {$\mathrm{L}$};
    }

}

\foreach \i in {1,...,4} {
    \node[globalpar] at ({\totalwidth + (\i-1)*\dx}, 3*\yshift - 0.55) {$\mathrm{G}$};
}

\draw[thick, dashed, rounded corners]
(-4.2, 0.8) rectangle ({\totalwidth + 3*\dx + 0.5}, \yshift - 1.3);

\draw[thick, dashed, rounded corners]
(-3.5, 3*\yshift + 0.8) rectangle ({\totalwidth + 3.5*\dx + 0.2}, 3*\yshift - 1.8);

\draw[->, thick]
({\totalwidth/2}, 1.6*\yshift )
--
({\totalwidth/2}, 2.6*\yshift)
node[midway, right]{};

\end{tikzpicture}
}

\caption{Conversion from a $(\Ik=9,\Ig=3,r=3,\ell=1)$-LRC to a $(\Fk=18,\Fg=4,r=3,\ell=1)$-LRC. Empty boxes denote information symbols. $\mathrm{L}$ and $\mathrm{G}$ denote local parity symbols and global parity symbols, respectively.}
\label{fig:lrc_global_merge}
\end{figure}
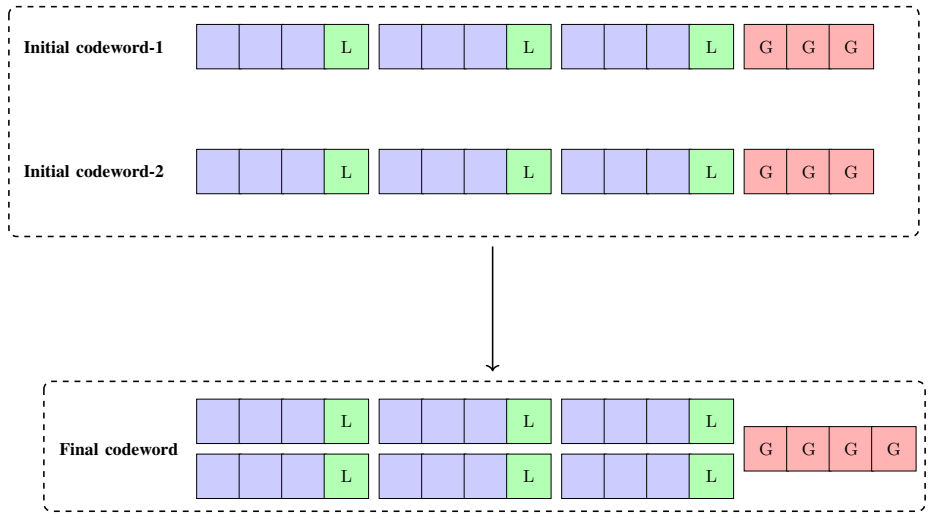

 In this work, we derive lower bounds on the bandwidth cost of conversion between optimal-distance $(k,g,r,\ell)$-LRCs in the global merge regime without imposing any linearity assumptions. We do so by using an information-theoretic approach which generalizes the approach introduced in~\cite{singhvi2025tightlowerboundsbandwidth} to the setting of {optimal-distance} LRCs. We first derive an information-theoretic constraint on the data downloaded during code conversion (Lemma~\ref{lem:bandwidth_constraint}). We then use this constraint to obtain a lower bound on the read bandwidth cost in Theorem~\ref{thm:optimalLRC}. In doing so, we show that the constructions of \cite{maturana2023LRCC} are bandwidth-optimal for the global merge regime when $\Fg \leq r$.

\textbf{Outline of the paper:} 
Section~\ref{sec:background} reviews the necessary background on locally repairable codes and convertible codes. In Section~\ref{sec:LRC_Storage}, we  establish fundamental information-theoretic inequalities for LRC storage that serve as essential tools for our analysis. In Section~\ref{sec:bandwidth_cost}, we derive lower bounds on the bandwidth cost of conversion between {optimal-distance} LRCs in the global merge regime and identify the parameter regimes in which these bounds are tight. Finally, Section~\ref{sec:conclusion} summarizes the main results.

\section{Background and Notation}\label{sec:background}

Convertible codes refer to a class of erasure codes that allow efficient tuning of code parameters while retaining desirable properties such as systematic encoding, MDS guarantees~\cite{chopra2024low, maturana2020access, maturana2022bandwidth, maturana2023bandwidth, ge2024mds, ramkumar2025mds, singhvi2025tightlowerboundsbandwidth, wang2025lowerboundsconversionbandwidth} or repair locality~\cite{maturana2023LRCC, Kong2023LocallyRC, ge2026locally, shi2026bounds}. We begin by introducing the relevant notation and preliminary definitions prior to the formal definition.

The notation in this paper builds on the notation introduced in~\cite{singhvi2025tightlowerboundsbandwidth}. For integers $a, i \in \mathbb{Z}_{\geq 1}$, let $[a]^i$ denote the set of integers $[a]^i\;=\;\{a(i-1)+1, a(i-1)+2, \ldots, ai\}$ and let $[a]\,:=\,[a]^{1}$. For a vector $\bfx$ of length $n$ and a subset $\cA \subseteq [n]$, the notation $\bfx_{\cA}$ represents the projection of $\bfx$ onto the indices specified by the set $\cA$. For a prime power $q$, let $\mathbb{F}_q$ denote the finite field of size $q$. For any set $\cA$, and any collection of random variables $\{X_i\}_{i \in \cA}$, let $X_\cA$ denote $\{X_i\}_{i \in \cA}$. Furthermore, for any functions $\{f_i\}_{i \in \cA}$, let $f_\cA(X_\cA)$ denote $\{f_i(X_i)\}_{i \in \cA}$.

An $(n,k)$ code $\mathcal{C}$ over a finite field $\mathbb{F}_q$ is defined by an injective  map $\mathcal{C}: \mathbb{F}_q^k \to \mathbb{F}_q^n$. A code $\mathcal{C}$ is \emph{systematic} if it maps a message vector $\bfm \in \mathbb{F}_q^{k}$ to a codeword that includes all the symbols of $\bfm$, termed \emph{information symbols}, uncoded. The indices of the codewords storing the information symbols are termed \emph{information coordinates}. For any code $\cC \subseteq\Fqn$ and subset $\cS \subseteq [n]$, let the \emph{punctured code} $\cC_\cS$ denote the projection of $\cC$ onto the coordinates in $\cS$. 

For any random variables $X$ and $Y$, we denote the entropy of $X$ by $\Hent{X}$ and the conditional entropy of $X$ given $Y$ by $\condH{X}{Y}$. The mutual information between $X$ and $Y$ is denoted by $\Iinfo{X}{Y}$.

\subsection{Locally Repairable Codes}
We now present the commonly-accepted definition of ($r,\delta$)-locality  for codes that are not assumed to be linear.

\begin{definition}[($r,\delta$)-locality \cite{prakash2012linearLRC, westerback2016combinatoricslrc, RawatLRC14}]\label{def:locality}
Let $\cC \subseteq \Fqn$ be a code of length $n$, and let $1 \le r \le k$ and $\delta \ge 2$. A subset $S \subseteq [n]$ is termed an \emph{$(r,\delta)$-locality set} of $\cC$ if

\begin{enumerate}[(i)]
    \item $|S| \le r + \delta - 1$, 
    \item the minimum distance of the punctured code $\cC_S$ satisfies $d(\cC_S) \ge \delta$.
\end{enumerate}  Then, $\cC$ has \emph{$(r,\delta)$-all-symbol locality} if every coordinate
$t \in [n]$ belongs to at least one $(r,\delta)$-locality set.  

Furthermore, for a systematic code $\cC \subseteq \Fqn$, let $[k]$ be the information coordinates without loss of generality. Then, $\cC$ has \emph{$(r,\delta)$-information locality} if every coordinate
$t \in [k]$ belongs to at least one $(r,\delta)$-locality set.  
\end{definition}

Next, we introduce the class of locally repairable codes we will be focusing on in this work.

\begin{definition}[$(k,g,r,\ell)$-LRC]\label{def:kgrlLRC}
Let $\cC$ be an {$(n,k)$ systematic} code. Let  $r \mid k$ and $ n \;:=\; k + \frac{k\ell}{r} + g.$ The code $\cC$ is termed a \emph{$(k,g,r,\ell)$-LRC} if it satisfies the following properties:
\begin{enumerate}[(i)]
    \item $\cC$ has $(r,\ell+1)$-information locality. 
    \item The $k$ information symbols are partitioned into
    $\mu := \frac{k}{r}$ disjoint locality sets, also termed \emph{local groups}, each containing $r$ information symbols
    and $\ell$ associated \emph{local parity symbols} that depend only on the corresponding information symbols. 
    \item Each codeword also consists of $g$ \emph{global parity} symbols, each of which depends on some subset of the $k$ information symbols. 
\end{enumerate}
\end{definition}

Let $\cC$ be a $(k,g,r,\ell)$-LRC. Observe that if all the global parity symbols, all the local parity symbols of a local group, and an information symbol from the same local group are erased, then by Definition~\ref{def:kgrlLRC}, such an erasure pattern is not recoverable. Hence, the minimum distance $d$ of $\cC$ satisfies 
\begin{equation}\label{eq:lrc-distance-bound}
    d \;\le\; g + \ell + 1. 
\end{equation}
Furthermore, if the above bound is met with equality, then $\cC$ is termed an \emph{optimal-distance} $(k,g,r,\ell)$-LRC. 

In this work, we study the code conversion problem specifically for optimal-distance $(k,g,r,\ell)$-LRCs.

\subsection{Convertible Codes}
Next, we provide the formal framework for studying code conversions introduced in \cite{maturana2020access, maturana2022convertible}.

\begin{definition}[Convertible Code {\cite{maturana2020access, maturana2022convertible}}]  
A \emph{convertible code} with parameters $(\In,\Ik;\Fn,\Fk)$ consists of:
\begin{enumerate}[(i)]
    \item An \emph{initial}
    {$(\In,\Ik)$} code $\mathcal{C}^I$, a \emph{final} {$(\Fn,\Fk)$} code $\mathcal{C}^F$.
    \item A pair of partitions $\IPart$ and $\FPart$ of the common message index set 
          $[M]$ with $M = \mathrm{lcm}(\Ik,\Fk)$, such that each block in $\IPart$ has size $\Ik$ 
          and each block in $\FPart$ has size $\Fk$.
    \item A  \emph{conversion procedure} that, for all 
          $\bfm \in \mathbb{F}_q^M$, takes as input
          $\{\mathcal{C}^I(\bfm_P) : \cP \in \IPart\}$ and produces as output
          $\{\mathcal{C}^F(\bfm_P) : \cP \in \FPart\}$.
\end{enumerate}
\end{definition}

During the conversion procedure, storage nodes can be classified into three types: \emph{unchanged nodes}, which remain part of the final codewords without modification; \emph{retired nodes}, which are removed during the conversion; and \emph{new nodes}, which are introduced in the final codewords and were not present in the initial codewords. The \emph{conversion coordinator}, which manages the conversion process, downloads data from both unchanged and retired nodes to construct the new nodes. Convertible codes that maximize the number of unchanged nodes are termed \emph{stable codes}. For stable codes, it suffices to study the read bandwidth cost, since the write bandwidth cost is fixed and independent of the conversion procedure. In this work, we restrict our focus to stable convertible codes.

\subsection{Locally Repairable Convertible Codes}

In the following definition, we formalize the notion of locally repairable convertible codes (LRCCs).

\begin{definition}[$(\Ik,\Ig,\Ir,\ell^I;\Fk,\Fg,\Fr,\ell^F)$-LRCC]
        When $\cC^I$ and $\cC^F$ are $(\Ik,\Ig,\Ir,\ell^I)$- and $(\Fk,\Fg,\Fr,\ell^F)$-LRCs, respectively, the corresponding convertible code is termed a \emph{locally repairable convertible code} (LRCC) with parameters $(\Ik,\Ig,\Ir,\ell^I;\Fk,\Fg,\Fr,\ell^F)$. Moreover, when $\cC^I$ and $\cC^F$ are optimal-distance, the corresponding convertible code is termed an optimal-distance LRCC.
\end{definition}

In the remainder of this work, the term LRCC refers to an $(\Ik,\Ig,\Ir,\ell^I;\Fk,\Fg,\Fr,\ell^F)$-LRCC. For LRCCs, let $\mu^I := \frac{\Ik}{\Ir}$ and $\mu^F := \frac{\Fk}{\Fr}$ denote the number of local groups of the initial and final codewords, respectively.

\begin{definition}[Global Merge Regime]
Within the class of LRCCs, the \emph{global merge regime} corresponds to the subclass of conversions wherein multiple codewords are merged into a single codeword while maintaining locality parameters. That is, $\Fk = \Il\Ik$, $\Fr = \Ir$, and $\ell^F = \ell^I$. Moreover, $\mu^F = \Il\mu^I$.
\end{definition}

We make the following observation for LRCCs in the global merge regime, namely $(\Ik,\Ig,r,\ell;\Il\Ik,\Fg,r,\ell)$-LRCCs. 

\begin{remark}
\label{rem:sys_unchanged_LP}
For any $(\Ik,\Ig,r,\ell;\Il\Ik,\Fg,r,\ell)$-LRCC, both the initial and final codes are systematic and encode the same data. Consequently, the information nodes of the initial codewords remain unchanged during the conversion procedure and directly serve as the corresponding information nodes in the final codewords; hence, no rewriting of these nodes is required.
\end{remark}

In~\cite{maturana2023LRCC}, Maturana and Rashmi presented constructions of stable optimal-distance
$(\Ik,\Ig,r,\ell;\Fk,\Fg,r,\ell)$-LRCCs. Leveraging the
piggybacking framework of~\cite{rashmi2017piggybacking}, they designed convertible
codes that significantly reduce the conversion bandwidth compared to the default
re-encoding approach. Let $\bw$ denote the read bandwidth cost. For the global merge regime, the constructions in~\cite{maturana2023LRCC} achieve 
\begin{equation*}
\bw
=
\begin{cases}
\Il\Fg\alpha , & \text{if } \Fg \le \Ig, \\[0.6em]
\Il\left(
\dfrac{(\Ik + \Imm\ell)(\Fg - \Ig)}{\Fg + \ell}
+ g^{I}
\right)\alpha, & \text{otherwise},
\end{cases}
\end{equation*}
where $\alpha$ denotes per-node storage capacity over the finite field $\mathbb{F}_q$.   

In this work, we derive lower bounds on the bandwidth cost of stable optimal-distance LRCCs in the global merge regime and our bounds demonstrate that the LRCC constructions of \cite{maturana2023LRCC} are bandwidth-optimal for the parameter regime $\Fg \leq r$.

\subsection{Related Work}

The approach used in this paper is a generalization of that of~\cite{singhvi2025tightlowerboundsbandwidth}, which utilized similar tools to derive information-theoretic lower bounds on the bandwidth cost of MDS convertible codes in the \textit{split regime}. We adapt the approach to study code conversion for LRCs, thus extending its applicability beyond the MDS setting. While the methodology is similar, the parameter regime considered in this work is fundamentally different: we focus on the global merge regime. 

Information-theoretic techniques have been previously used to establish fundamental limits in distributed storage, including the non-achievability of interior points on the storage–repair–bandwidth tradeoff for regenerating codes~\cite{shah2011distributed}. 

The access cost of conversion between LRCs with $(r,2)$-all-symbol locality was first studied in \cite{Kong2023LocallyRC}. These results were generalized to LRCs with $(r,\delta)$-all-symbol locality in both~\cite{ge2026locally} and~\cite{shi2026bounds}. In \cite{Justin2025}, the authors designed access-optimal convertible codes with information-theoretic security in the presence of passive eavesdroppers. Recently, in \cite{gruica2026convertiblecodesdatadevice}, the authors investigated the access cost of general linear codes and Reed-Muller codes in the merge regime. A related problem is the \emph{scaling problem}~\cite{
zhang2010alv,
zheng2011fastscale,
wu2012gsr,
zhang2014rethinking,
huang2015scaleRS,
wu2016ioefficient,
zhang2018optimal,
hu2018generalized,
zhang2020efficient,
rai2015adaptive,
rai2015adaptive2,
wu2020optimal,
hu2021combinedlocality,
wu2022optimaltradeoff,
wu2022placement
}, which involves converting an $[n,k,\alpha]$ code into an
$[n+s,k+s,\tfrac{k\alpha}{k+s}]$ code while preserving the total amount of stored
data and redistributing it across a different number of storage nodes. Another related problem is the \emph{data rebalancing problem}~\cite{CDR2020, CDRDecentralized2020, CDRCyclic2022}, which studies the efficient redistribution of replicated data when storage nodes are added to or removed from a system.

\section{Information-Theoretic Inequalities for LRC Storage}\label{sec:LRC_Storage}
In this section, we develop information-theoretic inequalities that characterize the structure of LRC storage. Our approach generalizes the one presented in~\cite{singhvi2025tightlowerboundsbandwidth} for {studying the conversion problem for} MDS-coded storage by incorporating the structural constraints imposed by locality. This serves as the foundation for the lower bounds on the bandwidth cost of LRCCs presented in the subsequent section.

Consider a DSS encoded using an optimal-distance $(k, g, r, \ell)$-LRC. The system comprises several independent codewords, and in the following analysis, we focus on one such codeword. Such a codeword is distributed across $k + \mu\ell + g$ nodes, where $\mu := k/r$. These consist of $k$ \emph{information nodes}, each storing a message symbol directly, and $\mu\ell + g$ \emph{parity nodes}, storing deterministic functions of the message symbols.

\begin{figure}[t]
\centering
\begin{tikzpicture}[scale=0.9, every node/.style={font=\small}]

\def\r{3}
\def\ell{2}
\def\g{3}
\def\groups{3}

\def\dx{0.85}
\def\gap{.25}

\pgfmathsetmacro{\totalwidth}{\groups*(\r*\dx + \ell*\dx + \gap)}

\tikzstyle{info}=[draw, rectangle, minimum width=0.8cm, minimum height=0.8cm, fill=blue!20]
\tikzstyle{localpar}=[draw, rectangle, minimum width=0.8cm, minimum height=0.8cm, fill=green!30]
\tikzstyle{globalpar}=[draw, rectangle, minimum width=0.8cm, minimum height=0.8cm, fill=red!30]

\foreach \j in {1,...,3} {

    \pgfmathsetmacro{\base}{(\j-1)*(\r*\dx + \ell*\dx + \gap)}

    \foreach \i in {1,...,3} {
        \pgfmathtruncatemacro{\t}{(\j-1)*\r + \i}
        \node[info] at ({\base + (\i-1)*\dx},0) {$X_{\t}$};
    }

    \foreach \i in {1,...,2} {
        \pgfmathtruncatemacro{\p}{(\j-1)*\ell + \i}
        \node[localpar] at ({\base + (\r+\i-1)*\dx},0) {$L_{\p}$};
    }

    \node at ({\base + 2*\dx}, -1) {Local group-$\j$};
}

\foreach \i in {1,...,3} {
    \node[globalpar] at ({\totalwidth + (\i-1)*\dx}, 0) {$G_{\i}$};
}
\node at ({\totalwidth + \dx}, -1) {Global Parities};

\end{tikzpicture}

\caption{A codeword of a $(k = 9, g = 3, r = 3, \ell = 2)$-LRC. The notation for the random variables corresponding to the data stored in information and local parity nodes are globally indexed.}
\label{fig:lrc_structure_notations}
\end{figure}
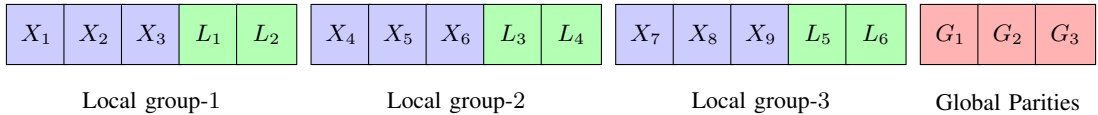

The $k$ information nodes are divided into $\mu$ disjoint local groups of size $r$. Each group is associated with $\ell$ \emph{local parity nodes} that store the local parity symbols; these symbols are functions exclusively of the $r$ information symbols in that specific group. Let $\tau \in [\mu]$ be the index of an arbitrary local group, and let $X_{[r]^\tau}$ and $L_{[\ell]^\tau}$ denote the random variables corresponding to the data stored in the $r$ information nodes and $\ell$ local parity nodes, respectively, comprising the $\tau$-th local group (see Figure~\ref{fig:lrc_structure_notations}). It holds that 
\begin{equation}
    \condH{L_{[\ell]^\tau}}{X_{[r]^\tau}} = 0\;.
\end{equation}
The remaining $g$ nodes are \emph{global parity nodes} that store the global parity symbols, each of which is a deterministic function of the $k$ information symbols. Let $j \in [g]$ be the index of an arbitrary global parity node, and  let $G_{j}$ denote the random variable corresponding to the data stored in the $j$-th global parity node. It holds that 
\begin{equation}
    \condH{G_{[g]}}{X_{[k]}} = 0\;.
\end{equation}

Let $\cQ \subseteq[g]$, $\cR \subseteq[\mu\ell]$, and $\cS \subseteq[k]$ be arbitrary subsets of indices of global parity nodes, local parity nodes, and information nodes, respectively, such that $\lvert \cQ\rvert + \lvert \cR\rvert + \lvert \cS\rvert\;\leq\;g + \ell$. Since the DSS is encoded using an optimal-distance $(k, g, r, \ell)$-LRC, it holds that
\begin{equation}\label{eq:inf_dist_prop}
    \condH{G_{\cQ},L_{\cR},X_{\cS}}{G_{[g]\setminus\cQ},L_{[\mu\ell]\setminus\cR},X_{[k]\setminus\cS}} = 0\;.
\end{equation}

We assume that each node has a storage capacity of $\alpha$ symbols over $\mathbb{F}_q$, and therefore cannot store data with entropy exceeding $\alpha$. Thus, let $j \in [k]$, $i \in [g]$, and $a \in [\mu\ell]$ be the indices of arbitrary information nodes, global parity nodes, and local parity nodes, respectively, and it follows that
\begin{equation}
    \Hent{X_{j}} \leq \alpha,\quad \Hent{G_{i}} \leq \alpha,\quad\Hent{L_{a}} \leq \alpha\;.
\end{equation}

Finally, we assume the random variables $X_{[k]}$ are independent and uniformly distributed, implying $\Hent{X_{j}} = \alpha$.

\begin{proposition}\label{prop:joint_entropy}
The total entropy of the data stored across the $k$ information nodes, the $\mu\ell$ local parity nodes, and the $g$ global parity nodes is \[
\Hent{X_{[k]}, L_{[\mu\ell]}, G_{[g]}}\;=\;k\alpha\;.\] 
\end{proposition}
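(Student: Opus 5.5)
The plan is to expand the joint entropy with the chain rule, conditioning the parity nodes on the information nodes. Concretely, I would write
\begin{equation*}
\Hent{X_{[k]}, L_{[\mu\ell]}, G_{[g]}} \;=\; \Hent{X_{[k]}} + \condH{L_{[\mu\ell]}, G_{[g]}}{X_{[k]}}\;.
\end{equation*}
The proof then reduces to two facts: the first term equals $k\alpha$, and the second term is zero.

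For the first term, I use the standing assumption that the $X_j$, $j\in[k]$, are independent and uniformly distributed with $\Hent{X_j}=\alpha$. Independence gives $\Hent{X_{[k]}}=\sum_{j\in[k]}\Hent{X_j}=k\alpha$.

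For the second term, I split it with the chain rule:
\begin{equation*}
\condH{L_{[\mu\ell]}, G_{[g]}}{X_{[k]}} \;\le\; \sum_{\tau\in[\mu]}\condH{L_{[\ell]^\tau}}{X_{[k]}} + \condH{G_{[g]}}{X_{[k]}}\;.
\end{equation*}
Each local term is bounded by $\condH{L_{[\ell]^\tau}}{X_{[r]^\tau}}=0$, because conditioning on the larger set $X_{[k]}\supseteq X_{[r]^\tau}$ cannot increase entropy. The global term is zero by the stated relation $\condH{G_{[g]}}{X_{[k]}}=0$. Since conditional entropy is also nonnegative, the second term equals zero, and the total entropy is $k\alpha$.

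There is no real obstacle here. The only care needed is to invoke "conditioning reduces entropy" when passing from $X_{[r]^\tau}$ to $X_{[k]}$, and to note that the local groups $[\ell]^\tau$, $\tau\in[\mu]$, partition $[\mu\ell]$. The optimal-distance property~\eqref{eq:inf_dist_prop} and the per-node capacity bounds are not needed for this proposition.
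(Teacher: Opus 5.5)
Your proof is correct. It is the argument the paper relies on: the paper states this proposition without a written proof, treating it as immediate from the chain rule $\Hent{X_{[k]}, L_{[\mu\ell]}, G_{[g]}} = \Hent{X_{[k]}} + \condH{L_{[\mu\ell]}, G_{[g]}}{X_{[k]}}$, the fact that every parity node is a deterministic function of the information nodes, and the independence and uniformity of $X_{[k]}$. One small wording point: your splitting inequality is subadditivity of conditional entropy rather than the chain rule itself, which does not affect the conclusion.
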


Next, we introduce a series of information-theoretic results. We begin by demonstrating that for an optimal-distance $(k,g,r,\ell)$-LRC, any $r$ nodes from a set of $r+\ell+g$ nodes corresponding to any single local group and the global parity nodes are conditionally independent with respect to the remaining local groups and uniformly distributed. That is, the $r+\ell+g$ nodes, conditioned on the remaining local groups, form an $(r+\ell+g,r)$ MDS code.

\begin{proposition}\label{prop:independence} 
Let $\tau \in [\mu]$ be the index of an arbitrary local group. Let $\cA \subseteq [g]$, $\cB \subseteq [\ell]^\tau$, and $\cD \subseteq [r]^\tau$ be any subsets of indicies of global parity nodes, local parity nodes of the $\tau$-th local group, and information nodes of the $\tau$-th local group, respectively, such that $|\cA| + |\cB| + |\cD| \le r$. Then, $G_{\cA}, L_{\cB}, X_{\cD}$ are conditionally independent with respect to $X_{[k]\setminus[r]^\tau}$ and uniformly distributed. That is,
\[
\condH{G_{\cA},L_{\cB},X_{\cD}}{X_{[k]\setminus[r]^\tau}} = (|\cA| + |\cB| + |\cD|)\alpha \;. 
\]
\end{proposition}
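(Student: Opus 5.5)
The proof is a sandwich argument: an upper bound on the conditional entropy from per-node capacity, and a matching lower bound from the distance property \eqref{eq:inf_dist_prop}.

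\textbf{Upper bound.} By subadditivity and the capacity constraint $\Hent{\cdot}\le\alpha$ per node, together with the fact that conditioning cannot increase entropy,
\[
\condH{G_{\cA},L_{\cB},X_{\cD}}{X_{[k]\setminus[r]^\tau}}\le(|\cA|+|\cB|+|\cD|)\alpha .
\]

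\textbf{Lower bound.} Condition on $X_{[k]\setminus[r]^\tau}$. Given these variables, all local parities outside group $\tau$ are determined, since $\condH{L_{[\ell]^{\tau'}}}{X_{[r]^{\tau'}}}=0$ for $\tau'\neq\tau$. The only remaining uncertainty is therefore $X_{[r]^\tau}$. By independence and uniformity of the information symbols,
\[
\condH{X_{[r]^\tau}}{X_{[k]\setminus[r]^\tau}}=r\alpha .
\]

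Now let $s:=|\cA|+|\cB|+|\cD|\le r$. The set $\cA\cup\cB\cup\cD$ has size $s$, and $[g]\cup[\ell]^\tau\cup[r]^\tau$ has $g+\ell+r$ elements. I would therefore choose a complementary set $\cE\subseteq([g]\cup[\ell]^\tau\cup[r]^\tau)\setminus(\cA\cup\cB\cup\cD)$ with $|\cE|=r-s$.

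The key claim is that $G_\cA,L_\cB,X_\cD$ together with $\cE$ and $X_{[k]\setminus[r]^\tau}$ determine $X_{[r]^\tau}$. The nodes left unrevealed are exactly the $g+\ell$ nodes of group $\tau$ and the global parities lying outside $\cA\cup\cB\cup\cD\cup\cE$. Applying \eqref{eq:inf_dist_prop} with these as the erased set (the erasure count is exactly $g+\ell$), they are functions of the revealed nodes, once the local parities of the other groups are included; those are themselves functions of $X_{[k]\setminus[r]^\tau}$.

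Hence
\[
r\alpha=\condH{X_{[r]^\tau}}{X_{[k]\setminus[r]^\tau}}\le \condH{G_\cA,L_\cB,X_\cD,\text{nodes in }\cE}{X_{[k]\setminus[r]^\tau}}.
\]
By the chain rule and capacity, the right side is at most
\[
\condH{G_\cA,L_\cB,X_\cD}{X_{[k]\setminus[r]^\tau}}+(r-s)\alpha ,
\]
which gives $\condH{G_\cA,L_\cB,X_\cD}{X_{[k]\setminus[r]^\tau}}\ge s\alpha$.

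\textbf{Main obstacle.} The delicate point is the bookkeeping in applying \eqref{eq:inf_dist_prop}. One must check that the unrevealed set has size exactly $g+\ell$, and that the conditioning side of \eqref{eq:inf_dist_prop} (all other nodes) is a function of what we actually condition on. The other groups' local parities are handled by the local-parity functional dependence, and the other groups' information nodes are conditioned on directly. The remaining steps are routine entropy manipulations.
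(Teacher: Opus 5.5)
Your proof is correct and is essentially the paper's argument: you pad the chosen nodes to $r$ nodes drawn from group $\tau$ and the global parities, apply the optimal-distance property in Equation~\eqref{eq:inf_dist_prop} to the complementary $g+\ell$ nodes (with the other groups' local parities absorbed as functions of $X_{[k]\setminus[r]^\tau}$), and compare against $r\alpha$. The paper treats only the case $|\cA|+|\cB|+|\cD|=r$, computing the conditional entropy directly as $\Hent{X_{[k]},L_{[\mu\ell]},G_{[g]}}-\Hent{X_{[k]\setminus[r]^\tau}}=k\alpha-(k-r)\alpha$ and asserting without argument that this case suffices; your sandwich with the padding set $\mathcal{E}$ makes that reduction to smaller sets explicit.
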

\begin{IEEEproof}
    Observe that it suffices to show this result for all sets $\cA \subseteq [g], \cB \subseteq [\ell]^\tau$, and $\cD \subseteq [r]^\tau$ such that $|\cA| + |\cB| + |\cD| = r$. It follows that, as desired,
    \begin{align*}
        \condH{G_{\cA},L_{\cB},X_{\cD}}{X_{[k]\setminus[r]^\tau}}\;&=\;\Hent{X_{[k]\setminus[r]^\tau},G_{\cA},L_{\cB},X_{\cD}} - \Hent{X_{[k]\setminus[r]^\tau}}\\
        \;&=\;\Hent{X_{[k]\setminus[r]^\tau},L_{[\mu\ell]\setminus[\ell]^\tau},G_{\cA},L_{\cB},X_{\cD}} - \Hent{X_{[k]\setminus[r]^\tau}}\\
        \;&=\;\Hent{X_{[k]}, L_{[\mu\ell]}, G_{[g]}} -  \Hent{X_{[k]\setminus[r]^\tau}}\\
        &\qquad  -\; \condH{G_{[g]\setminus\cA},L_{[\ell]^\tau\setminus\cB},X_{[r]^\tau\setminus\cD}}{X_{[k]\setminus[r]^\tau},L_{[\mu\ell]\setminus[\ell]^\tau},G_{\cA},L_{\cB},X_{\cD}}\\
        \;&=\;\Hent{X_{[k]}, L_{[\mu\ell]}, G_{[g]}} - \Hent{X_{[k]\setminus[r]^\tau}}\tag{{Equation~\eqref{eq:inf_dist_prop}}: optimal-distance}\\
        \;&=\;k\alpha - (k-r)\alpha\tag{Proposition~\ref{prop:joint_entropy}, Independence of $X_{[k]}$}\\
        \;&=\;(|\cA| + |\cB| + |\cD|)\alpha\;.
    \end{align*}
\end{IEEEproof}

Next, we establish the structural property of optimal-distance $(k,g,r,\ell)$-LRCs that every global parity symbol must be a function of all $k$ information symbols. This will be used to derive properties of parity nodes of optimal-distance LRCCs in the following section.

\begin{proposition}\label{proposition:LRC_structural_property}
    For any optimal-distance $(k,g,r,\ell)$-LRC, each of the $g$ global parity nodes is a function of all $k$ information nodes.
\end{proposition}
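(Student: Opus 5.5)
We argue by contradiction in the information-theoretic language of this section. We read ``$G_j$ is a function of all $k$ information nodes'' as follows: for every $j\in[g]$ and every $t\in[k]$, $G_j$ is not determined by $X_{[k]\setminus\{t\}}$, i.e.,
\[
\condH{G_j}{X_{[k]\setminus\{t\}}} > 0 .
\]
So suppose that for some global parity index $j$ and some information coordinate $t$ we have $\condH{G_j}{X_{[k]\setminus\{t\}}}=0$. Let $\tau$ be the local group containing $t$, so $t\in[r]^\tau$.

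The key step is to exhibit $r$ nodes from the $\tau$-th local group together with the global parities whose joint conditional entropy, given $X_{[k]\setminus[r]^\tau}$, is strictly less than $r\alpha$. This contradicts Proposition~\ref{prop:independence}. Take $\cA=\{j\}$, $\cB=\emptyset$ and $\cD=[r]^\tau\setminus\{t\}$; these satisfy $|\cA|+|\cB|+|\cD|=r$. By the chain rule,
\[
\condH{G_j,X_{\cD}}{X_{[k]\setminus[r]^\tau}} = \condH{X_{\cD}}{X_{[k]\setminus[r]^\tau}} + \condH{G_j}{X_{[k]\setminus\{t\}}} .
\]
Here we used $X_{\cD}\cup X_{[k]\setminus[r]^\tau}=X_{[k]\setminus\{t\}}$. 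The first term equals $(r-1)\alpha$ by independence of the $X_{[k]}$, and the second term vanishes by assumption. The total is therefore $(r-1)\alpha<r\alpha$. Proposition~\ref{prop:independence}, however, asserts that this quantity equals $r\alpha$, which is a contradiction. Hence every $G_j$ depends nontrivially on every information symbol.

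The main obstacle is the edge case $r=1$ when $\ell$ and $g$ are small, so we should check that Proposition~\ref{prop:independence} applies with these choices. It does: it requires only $|\cA|+|\cB|+|\cD|\le r$ with $\cA\subseteq[g]$, and $g\ge1$ whenever the statement is non-vacuous. The other subtlety is the meaning of ``function of'' for non-linear codes. The entropic formulation above is the natural one, since a symbol that is functionally independent of $X_t$ given the rest has zero conditional entropy. If the intended meaning is instead ``not a function of any proper subset of the information symbols,'' that follows immediately: any proper subset is contained in some $[k]\setminus\{t\}$, and conditioning on more variables can only reduce entropy.

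An alternative route avoids Proposition~\ref{prop:independence} and works directly from Equation~\eqref{eq:inf_dist_prop}. Suppose $G_j$ does not depend on $X_t$. Erase $X_t$, all $\ell$ local parities of group $\tau$, and the $g-1$ global parities other than $G_j$. That is $g+\ell$ erasures, so by the distance property $X_t$ must be recoverable from the remaining nodes. But none of the remaining nodes depends on $X_t$, which contradicts the uniformity of $X_t$. We would present the first argument, since it reuses the preceding proposition directly.
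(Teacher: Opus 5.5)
Your proof is correct and is essentially the paper's argument: instantiating Proposition~\ref{prop:independence} with $\cA=\{j\}$, $\cB=\emptyset$, $\cD=[r]^\tau\setminus\{t\}$ invokes Equation~\eqref{eq:inf_dist_prop} on exactly the size-$(g+\ell)$ erasure pattern $G_{[g]\setminus\{j\}}$, $L_{[\ell]^\tau}$, $X_t$ that the paper uses directly. Your ``alternative route'' is the paper's proof itself.
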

\begin{IEEEproof}
    Assume to the contrary that there exists a global parity node that is not a function of all $k$ information nodes. Let $i \in [g]$ and $j \in [k]$ be arbitrary indices of global parity nodes and information nodes, respectively, such that the $i$-th global parity node is only a function of the information nodes indexed by $[k]\setminus\{j\}$. Consider the erasure pattern consisting of all global parity nodes except the $i$-th one, the $j$-th information node, and all of the local parity nodes of its local group. Such a pattern is not recoverable as by Definition~\ref{def:kgrlLRC}, none of the remaining nodes are functions of the $j$-th information node. However, this pattern is of size $g + \ell$, contradicting the assumption of optimal-distance. Therefore, each of the $g$ global parity nodes is a function of all $k$ information nodes, as desired.
\end{IEEEproof}

The following lemma establishes a general upper bound on the conditional mutual information between functions of random variables, which will be useful in our later analysis.

\begin{lemma}\label{lem:conditional_mutual_information_bound}
    Let $Z_1, Z_2, \dots,Z_n$ be a collection of random variables. Given any deterministic functions $\{f_i\}_{i \in [n]}$, disjoint subsets $\cA, \cB, \cD \subseteq [n]$, and subsets $\cA' \subseteq \cA,\cB' \subseteq \cB$ such that $Z_{(\cA \cup \cB)\setminus(\cA' \cup \cB')}$ are conditionally independent with respect to $Z_{\cD}$,
    \begin{align*}
        &\condI{f_\cA(Z_\cA)}{f_\cB(Z_\cB)}{f_{\cD}(Z_{\cD})}\;\leq\;\condH{f_{\cA'}(Z_{\cA'})}{f_{\cD}(Z_{\cD})}\;+\;\condH{f_{\cB'}(Z_{\cB'})}{f_{\cD}(Z_{\cD})}\;.
    \end{align*}
\end{lemma}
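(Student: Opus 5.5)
Write $A := f_\cA(Z_\cA)$, $B := f_\cB(Z_\cB)$, $D := f_\cD(Z_\cD)$, and split $A = (A', A'')$ with $A' := f_{\cA'}(Z_{\cA'})$, $A'' := f_{\cA\setminus\cA'}(Z_{\cA\setminus\cA'})$. Split $B = (B', B'')$ in the same way. Since $\cA$, $\cB$, $\cD$ are disjoint, the hypothesis says that $Z_{\cA\setminus\cA'}$ and $Z_{\cB\setminus\cB'}$ are conditionally independent given $Z_\cD$. I read "conditionally independent with respect to $Z_\cD$" as mutual conditional independence of the individual $Z_i$, $i \in (\cA\cup\cB)\setminus(\cA'\cup\cB')$. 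The plan has two steps: peel off the primed parts with the chain rule, and then show that what remains is zero.

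\textbf{Step 1 (chain rule).} Expanding the mutual information gives
\begin{align*}
\condI{A}{B}{D} &= \condI{A'}{B}{D} + \condI{A''}{B}{D,A'}\\
&\le \condH{A'}{D} + \condI{A''}{B'}{D,A'} + \condI{A''}{B''}{D,A',B'}\\
&\le \condH{A'}{D} + \condH{B'}{D} + \condI{A''}{B''}{D,A',B'}.
\end{align*}
The middle term is bounded using $\condI{A''}{B'}{D,A'} \le \condH{B'}{D,A'} \le \condH{B'}{D}$. It then suffices to show that $\condI{A''}{B''}{D,A',B'} = 0$.

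\textbf{Step 2 (the remaining term vanishes).} This is the main obstacle. The conditioning includes $A'$ and $B'$, which are not covered by the independence hypothesis. Conditioning on a common function could in general create dependence. If the hypothesis only says that $Z_{\cA\setminus\cA'}$ and $Z_{\cB\setminus\cB'}$ are independent given $Z_\cD$, the claim can fail. A counterexample: $Z_{a'} = Z_{b''}$ with $Z_{a''}$ independent, so conditioning on $A'$ does no harm. However, $Z_{a'} = Z_{a''} \oplus Z_{b''}$ breaks it.

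So I need an additional ingredient, and I would read the hypothesis as intended in the paper's application. There, the primed variables are the ones that may depend on others; in the LRC setting they come from Proposition~\ref{prop:independence}. The unprimed-complement variables are independent given $D$ and are also conditionally independent of the primed ones, $Z_{\cA'}$ and $Z_{\cB'}$. Under that reading, $(Z_{\cA\setminus\cA'}, Z_{\cA'})$ is independent of $Z_{\cB\setminus\cB'}$ given $(Z_\cD, Z_{\cB'})$. Passing to functions, with $D$ a function of $Z_\cD$, gives $\condI{A''}{B''}{D,A',B'} = 0$.

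If instead I must work only with the literal hypothesis, I would use a different split in Step 1. I would write
\[
\condI{A}{B}{D} \le \condI{A''}{B''}{D} + \condH{A'}{D} + \condH{B'}{D},
\]
via the identity
\[
\condI{A}{B}{D} = \condI{A''}{B''}{D} + \condI{A''}{B'}{D,B''} + \condI{A'}{B}{D,A''}.
\]
The second term is at most $\condH{B'}{D}$ and the third is at most $\condH{A'}{D}$, since conditioning reduces entropy. The first term equals zero directly from the conditional independence of $Z_{\cA\setminus\cA'}$ and $Z_{\cB\setminus\cB'}$ given $Z_\cD$, via data processing. That step requires the functions to be applied to $Z_\cD$ in a way that does not lose the independence. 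If the conditioning is on $D = f_\cD(Z_\cD)$ rather than on $Z_\cD$ itself, independence does not transfer in general. Here I would first try to use that the statement is applied with $f_\cD$ the identity or an injective map, so conditioning on $D$ is the same as conditioning on $Z_\cD$. This second route is the one I would write up, since it uses the hypothesis exactly once.
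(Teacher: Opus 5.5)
Your second route is correct and is essentially the paper's proof. Both arguments reduce the claim to $\condI{A}{B}{D}\le\condH{A'}{D}+\condH{B'}{D}+\condI{A''}{B''}{D}$ and then kill the last term with the independence hypothesis: the paper gets there by expanding $\condI{A}{B}{D}=\condH{A}{D}+\condH{B}{D}-\condH{A,B}{D}$ and using subadditivity and monotonicity, you via the chain rule. Your first route can be dropped; its extra assumption would actually fail in the paper's application, since there $Z_{\cB'}$ is a function of the remaining $r$ nodes given $Z_\cD$.

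Your caveat about conditioning on $f_\cD(Z_\cD)$ rather than $Z_\cD$ is well founded. The paper's ``Independence'' step silently uses conditional independence given $f_\cD(Z_\cD)$, and the statement as literally written is false otherwise: take $Z_1=Z_2=Z_3$ a uniform bit, $f_3$ constant, and $\cA'=\cB'=\emptyset$. In the paper's application $f_\cD$ is the identity, so restricting to that case, as you propose, is exactly what is needed.
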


\begin{IEEEproof}
Observe that, as desired,
\begin{align*}
    &\condI{f_\cA(Z_\cA)}{f_\cB(Z_\cB)}{f_{\cD}(Z_{\cD})} \\
    \;=\;&\condH{f_\cA(Z_\cA)}{f_{\cD}(Z_{\cD})} + \condH{f_\cB(Z_\cB)}{f_{\cD}(Z_{\cD})}  -\condH{f_{\cA \cup \cB}(Z_{\cA \cup \cB})}{f_{\cD}(Z_{\cD})}\\
    \;\leq\;&\condH{f_{\cA'}(Z_{\cA'})}{f_{\cD}(Z_{\cD})} + \condH{f_{\cB'}(Z_{\cB'})}{f_{\cD}(Z_{\cD})} 
      +\; \condH{f_{\cA\setminus\cA'}(Z_{\cA\setminus\cA'})}{f_{\cD}(Z_{\cD})}  \\
    &\quad+ \condH{f_{\cB\setminus\cB'}(Z_{\cB\setminus\cB'})}{f_{\cD}(Z_{\cD})}  -\;\condH{f_{\cA \cup \cB}(Z_{\cA \cup \cB})}{f_{\cD}(Z_{\cD})} \tag{Sub-additivity}\\
    \;=\;&\condH{f_{\cA'}(Z_{\cA'})}{f_{\cD}(Z_{\cD})} + \condH{f_{\cB'}(Z_{\cB'})}{f_{\cD}(Z_{\cD})} 
     +\; \condH{f_{(\cA \cup \cB)\setminus(\cA' \cup \cB')}(Z_{(\cA \cup \cB)\setminus(\cA' \cup \cB')})}{f_{\cD}(Z_{\cD})} \\
    &\quad -\; \condH{f_{\cA \cup \cB}(Z_{\cA \cup \cB})}{f_{\cD}(Z_{\cD})}\tag{Independence}\\
    \;\leq\;&\condH{f_{\cA'}(Z_{\cA'})}{f_{\cD}(Z_{\cD})} + \condH{f_{\cB'}(Z_{\cB'})}{f_{\cD}(Z_{\cD})}\;.
\end{align*}
\end{IEEEproof}

Next, we state the following result from \cite{singhvi2025tightlowerboundsbandwidth}, which will be used in conjunction with the previous lemma to derive a lower bound on the read bandwidth cost of stable optimal-distance LRCCs in the following section. 

\begin{lemma}\cite[Lemma 2]{singhvi2025tightlowerboundsbandwidth}\label{lem:min_avg_entropy_bound}
    Let $Z_1,\dots,Z_b$ be random variables and let $\{f_i\}_{i \in [b]}$ be deterministic functions. For any $a \le b$, where $\cA_a$ is the set of all subsets of $[b]$ of size $a$, $$\underset{\cA \in \cA_a}{\min} \;\Hent{f_\cA(Z_\cA)}\;\le\; \frac{a}{b}\,\sum_{i \in [b]}\Hent{f_{i}(Z_{i})}\;.$$
    If $Z_1,\dots,Z_b$ are independent, then $$\underset{\cA \in \cA_a}{\min} \;\Hent{f_\cA(Z_\cA)}\;\le\; \frac{a}{b}\,\Hent{f_{[b]}(Z_{[b]})}\;.$$
\end{lemma}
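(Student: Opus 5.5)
The statement is Lemma~\ref{lem:min_avg_entropy_bound}: among size-$a$ subsets, some subset has joint entropy at most $\frac{a}{b}$ times the sum (or, under independence, the joint entropy). I will compare the minimum with the average over all $a$-subsets, then bound that average.

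\textbf{First claim.} Write $W_i := f_i(Z_i)$. By sub-additivity of entropy, every $\cA \in \cA_a$ satisfies $\Hent{W_\cA} \le \sum_{i\in\cA}\Hent{W_i}$. The minimum over $\cA_a$ is at most the average over $\cA_a$, which is bounded by
\[
\frac{1}{\binom{b}{a}}\sum_{\cA\in\cA_a}\sum_{i\in\cA}\Hent{W_i}\;=\;\frac{\binom{b-1}{a-1}}{\binom{b}{a}}\sum_{i\in[b]}\Hent{W_i}\;=\;\frac{a}{b}\sum_{i\in[b]}\Hent{W_i}.
\]
The first equality holds because each index $i$ lies in exactly $\binom{b-1}{a-1}$ of the $a$-subsets.

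\textbf{Second claim.} If $Z_1,\dots,Z_b$ are independent, then so are the functions $W_1,\dots,W_b$. Hence $\Hent{W_{[b]}}=\sum_{i\in[b]}\Hent{W_i}$, and substituting this into the first claim gives the result.

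There is no real obstacle here. The only point to check is that independence of the $Z_i$ carries over to the $W_i$, which holds since each $W_i$ is a deterministic function of $Z_i$ alone.
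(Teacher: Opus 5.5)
Your proof is correct. You bound the minimum by the average over all $a$-subsets, apply sub-additivity, and use that each index lies in exactly $\binom{b-1}{a-1}$ of those subsets; in the independent case, additivity of entropy for the independent variables $f_i(Z_i)$ gives the second form. This is the standard averaging argument; the paper imports the lemma from \cite{singhvi2025tightlowerboundsbandwidth} without reproducing a proof, and only the first, independence-free form is used here, in the proof of Lemma~\ref{lem:bandwidth_constraint}.
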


\section{Bandwidth Cost of LRCCs in the Global Merge Regime}\label{sec:bandwidth_cost}

In this section, we generalize the information-theoretic approach for studying code conversion developed in~\cite{singhvi2025tightlowerboundsbandwidth} for the setting of LRCCs to establish lower bounds on the bandwidth cost of stable optimal-distance $(\Ik, \Ig, r, \ell; \Fk = \Il \Ik, \Fg, r, \ell)$-LRCCs. Recall that in the global merge regime, multiple initial codewords are merged into a single final codeword. As discussed in Remark~\ref{rem:sys_unchanged_LP}, for LRCCs, the initial information nodes remain unchanged during the global merge conversion procedure, and serve as the information nodes in the final codeword.

The following lemma is a direct consequence of the optimal-distance of the initial and final LRCs and the independence of the initial codewords and local groups in the global merge regime.
\begin{lemma}
For any optimal-distance $(\Ik, \Ig, r, \ell; \Il\Ik, \Fg, r, \ell)$-LRCC, the following hold: 
\begin{enumerate}[(i)]
    \item All final global parity nodes are new nodes.
    \item All initial global parity nodes are retired nodes.
    \item All unchanged initial local parity nodes must serve as local parity nodes for the same set of information nodes in the final codeword.
\end{enumerate}
\end{lemma}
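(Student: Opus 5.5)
Each item follows from one of three facts: Proposition~\ref{proposition:LRC_structural_property} for the final code, the independence of the $\Il$ initial codewords, and the independence of distinct local groups. Throughout, write $X_{[\Ik]^i}$ for the information nodes of the $i$-th initial codeword, $i\in[\Il]$. By Remark~\ref{rem:sys_unchanged_LP}, these are also the information nodes of the final codeword, and the messages of different initial codewords are independent and uniform.

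\emph{Item (i).} Suppose some final global parity node is an unchanged node. Then it is an initial node of some initial codeword $i$, so it is a deterministic function of $X_{[\Ik]^i}$ alone. If it is an initial information or local parity node, this is immediate. If it is an initial global parity node, it holds because the initial codeword depends only on its own message. On the other hand, Proposition~\ref{proposition:LRC_structural_property} applied to the final optimal-distance $(\Il\Ik,\Fg,r,\ell)$-LRC says every final global parity node is a function of all $\Il\Ik$ information nodes. I would make ``is a function of'' precise as: the node is not determined by the information nodes it omits, equivalently some coordinate it nontrivially depends on. With $\Il\ge 2$, a function of $X_{[\Ik]^i}$ does not depend on $X_{[\Ik]^{i'}}$ for $i'\neq i$, which is a contradiction. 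Hence every final global parity node is new.

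\emph{Item (ii).} An unchanged initial global parity node must play some role in the final codeword: information, local parity, or global parity. Information nodes are the original ones. Global parity is excluded by (i). If it served as a local parity of final local group $\tau$, it would be a function of $X_{[r]^\tau}$ only. But by Proposition~\ref{proposition:LRC_structural_property} for the initial code, it depends on every information node of its initial codeword, and $\Ik>r$ or, if $\mu^I=1$, still fails to be restricted to the correct group. To make the case $\mu^I=1$ rigorous, I would use Proposition~\ref{prop:independence} for the initial code. Conditioned on the other groups, the group's $r$ information nodes together with the $\ell$ local parities and $\Ig$ global parities form an MDS code. A global parity that is also a local parity of that group would duplicate a local parity slot, contradicting optimal distance of the final code via a counting argument in \eqref{eq:inf_dist_prop}. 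In the case $\mu^I\ge 2$, the node depends on information nodes outside final group $\tau$, since the final groups refine the initial information sets. This is a contradiction. So all initial global parities are retired.

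\emph{Item (iii).} An unchanged initial local parity node of initial group $\tau$ is a function of $X_{[r]^\tau}$ only. By (i) it is not a final global parity, so it is a final local parity of some final group $\tau'$. Hence it is a function of $X_{[r]^{\tau'}}$ only. If $\tau'\ne\tau$, the node is a function of two disjoint independent sets of variables, so it is constant, i.e.\ has zero entropy. But Proposition~\ref{prop:independence} with $|\cB|=1\le r$ gives entropy $\alpha>0$, a contradiction. Thus $\tau'=\tau$.

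The main obstacle is the formal meaning of ``function of all information nodes'' and the degenerate case in (ii) where the initial code has a single local group. I would handle both uniformly by an entropy argument: show that the node has positive conditional entropy given the information nodes outside the claimed support. This follows from Proposition~\ref{prop:independence} or from \eqref{eq:inf_dist_prop}, which contradicts membership in the smaller support.
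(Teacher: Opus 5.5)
\textbf{There is a genuine gap in item (ii), case $\mu^I=1$.}

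Your item (i) and the $\Ik>r$ half of item (ii) are the paper's argument: Proposition~\ref{proposition:LRC_structural_property} applied to the final code, respectively the initial code. Your clause ``the final groups refine the initial information sets'' is unproven, but it is also unneeded, since depending on all $\Ik>r$ information nodes already rules out being a final local parity.

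The gap is the case $\mu^I=1$, i.e.\ $\Ik=r$. There you try to derive a contradiction, but none exists, because the statement is false as literally written. Each initial codeword is then an $(r+\ell+\Ig,r)$ MDS code. All of its parities, local or global, are functions of exactly $X_{[\Ik]^t}$, which is a legitimate final local group. So your remark that the node ``still fails to be restricted to the correct group'' is wrong.

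Here is a counterexample to the literal claim. Take any stable conversion that keeps an initial local parity $L$ and retires an initial global parity $G$ of the same codeword, and swap the two labels in the initial code.
\begin{itemize}
\item The relabeled code is still an optimal-distance $(r,\Ig,r,\ell)$-LRC. By Proposition~\ref{prop:independence}, any $\ell$ of its parities together with the $r$ information nodes form an $(r+\ell,r)$ MDS local group, and relabeling does not change the minimum distance.
\item The conversion itself is unchanged, yet now an initial global parity is unchanged.
\end{itemize}
Your ``duplicate a local parity slot'' count only yields a contradiction if \emph{more than} $\ell$ parities of one initial codeword remain unchanged.

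\textbf{The missing ingredient is that count, followed by relabeling; this is how the paper handles $\Ik=r$.}
\begin{itemize}
\item By (i), each unchanged parity of initial codeword $t$ is a final local parity.
\item It depends on every node of $X_{[\Ik]^t}$, and $|[\Ik]^t|=r$. So it must be a local parity of the final group whose information set is exactly $[\Ik]^t$.
\item That group has only $\ell$ local parities, so at most $\ell$ parities per initial codeword survive.
\item By the MDS symmetry of the initial code, you may label the survivors as local parities. Then (ii) holds \emph{without loss of generality}.
\end{itemize}
You should prove (ii) in this relabeled form rather than by contradiction.

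\textbf{A smaller issue in (iii).} Writing the final group as $[r]^{\tau'}$ and splitting into ``$\tau'=\tau$ or disjoint'' presupposes that final local groups align with initial ones. That alignment is part of what this lemma is used to establish. A priori, a final group with information set $S$ could overlap $[r]^\tau$ only partially.

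The fix is short:
\begin{itemize}
\item Since the $X_j$ are independent, a function of both $X_{[r]^\tau}$ and $X_S$ is a function of $X_{[r]^\tau\cap S}$.
\item Suppose $\cD:=[r]^\tau\cap S\subsetneq[r]^\tau$. Proposition~\ref{prop:independence} with $|\cB|=1$ gives $(L_a,X_{\cD})$ conditional entropy $(|\cD|+1)\alpha$. This contradicts $L_a$ being a function of $X_{\cD}$.
\end{itemize}
This is exactly the ``depends on each of its $r$ information nodes'' fact the paper invokes.
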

\begin{IEEEproof} We first prove that all final global parity nodes are newly introduced. Assume to the contrary that there exists a final global parity node that is unchanged. It must serve as a parity node for some initial codeword and thus is a function of at most $\Ik$ information nodes. As $\Il \ge 2$, this contradicts the structural property of optimal-distance LRCs of the final codeword that every global parity node must be a function of all $\Il\Ik$ information nodes (Proposition~\ref{proposition:LRC_structural_property}). Thus, all final global parity nodes are new nodes.

Next, we similarly prove that all initial global parity nodes are retired. We first separately consider the case that $\Ik = r$. In this case, the initial codewords are all $(r+\ell+\Ig,r)$ MDS codes, and thus initial local parity nodes and initial global parity nodes are equivalent up to relabeling. We claim that we can label all unchanged initial parity nodes as local parity nodes so that all initial global parity nodes must be retired. Assume to the contrary that there exists an initial codeword with at least $\ell + 1$ unchanged parity nodes. By the first part of this proof, the unchanged parity nodes must all serve as local parity nodes in the final codeword. As they are all functions of the same set of $r$ information symbols, and final local groups are disjoint, the unchanged parities must all serve as local parities for the same final local group. This is a contradiction as each final local group contains only $\ell$ local parities. Thus, without loss of generality, when $\Ik = r$, all initial global parity nodes are retired.

Next, let us consider the case that $\Ik > r$. Assume to the contrary that there exists an initial global parity node that remains unchanged. Let $t \in [\Il]$ be the index of an arbitrary initial codeword with an unchanged global parity node. By the first part of this proof, the unchanged initial global parity node must serve as a local parity node in some local group of the final codeword and is thus a function of $r$ information nodes. As $\Ik > r$, this contradicts the structural property of optimal-distance LRCs of the $t$-th initial codeword that every global parity node must be a function of all $\Ik$ information nodes (Proposition~\ref{proposition:LRC_structural_property}). Thus, in all cases, all initial global parity nodes are retired.

Finally, note that by the first part of this proof, any unchanged initial local parity node must serve as a local parity node in the final code as well. Moreover, by Definition~\ref{def:locality}, each local parity is a function of each one of the $r$ information nodes in its local group, so an unchanged local parity node must serve as a local parity node for the same set of information nodes in the final codeword as it did in the initial codeword.
\end{IEEEproof}
Consequently, for stable optimal-distance LRCCs in the global merge regime, the only new nodes are the final global parity nodes, and the local groups, including the local parity nodes themselves, remain unchanged from the initial codewords to the final codeword. For the remainder of this paper, we focus only on stable optimal-distance LRCCs in the global merge regime.

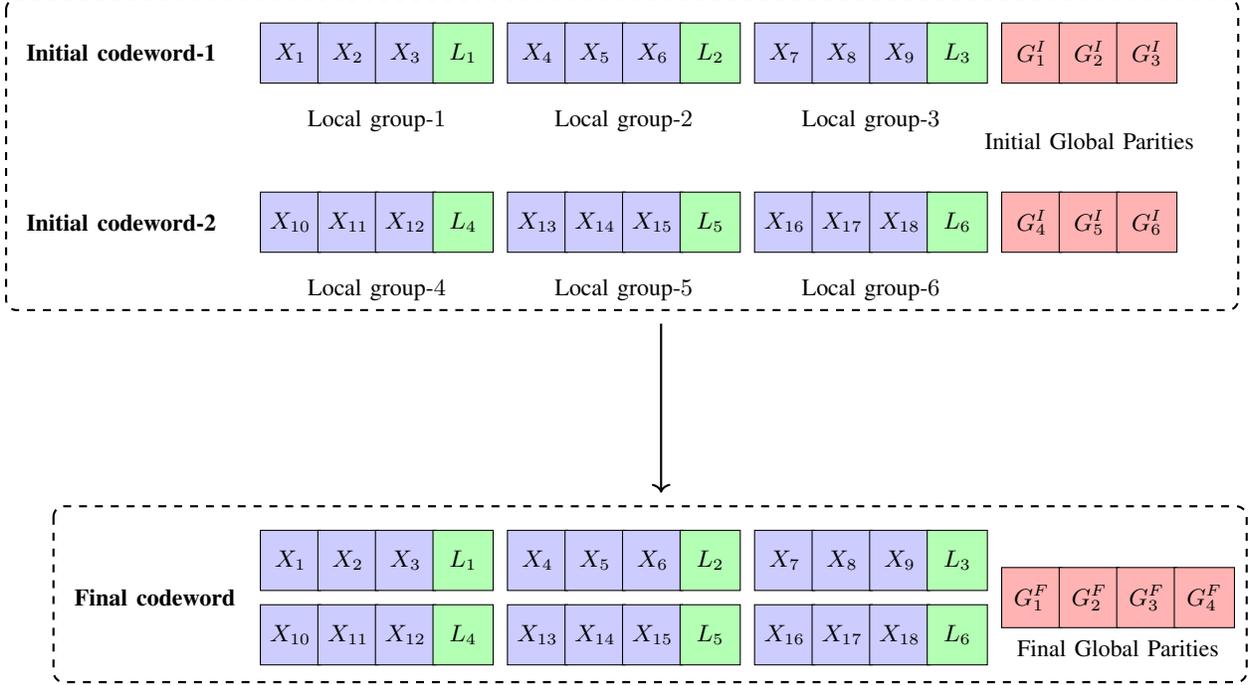
\begin{figure}[t]
\centering
\begin{tikzpicture}[scale=0.9, every node/.style={font=\small}]

\def\r{3}
\def\ell{1}
\def\g{3}
\def\groups{3}

\def\dx{0.85}
\def\gap{.25}
\def\yshift{-2.5}

\pgfmathsetmacro{\totalwidth}{\groups*(\r*\dx + \ell*\dx + \gap)}

\tikzstyle{info}=[draw, rectangle, minimum width=0.8cm, minimum height=0.8cm, fill=blue!20]
\tikzstyle{localpar}=[draw, rectangle, minimum width=0.8cm, minimum height=0.8cm, fill=green!30]
\tikzstyle{globalpar}=[draw, rectangle, minimum width=0.8cm, minimum height=0.8cm, fill=red!30]

\node[align=right] at (-2.5, 0) {\textbf{Initial codeword-1}};
\node[align=right] at (-2.5, \yshift) {\textbf{Initial codeword-2}};

\foreach \j in {1,...,3} {

    \pgfmathsetmacro{\base}{(\j-1)*(\r*\dx + \ell*\dx + \gap)}

    \foreach \i in {1,...,3} {
        \pgfmathtruncatemacro{\t}{(\j-1)*\r + \i}
        \node[info] at ({\base + (\i-1)*\dx},0) {$X_{\t}$};
    }

    \foreach \i in {1} {
        \pgfmathtruncatemacro{\p}{(\j-1)*\ell + \i}
        \node[localpar] at ({\base + (\r+\i-1)*\dx},0) {$L_{\p}$};
    }

    \node at ({\base + 1.5*\dx}, -1) {Local group-$\j$};
}

\foreach \i in {1,...,3} {
    \node[globalpar] at ({\totalwidth + (\i-1)*\dx}, 0) {$G^I_{\i}$};
}

\node at ({\totalwidth + \dx}, -1.3) {Initial Global Parities };

\foreach \j in {4,...,6} {

    \pgfmathsetmacro{\base}{(\j-4)*(\r*\dx + \ell*\dx + \gap)}

    \foreach \i in {1,...,3} {
        \pgfmathtruncatemacro{\t}{9 + (\j-4)*\r + \i}
        \node[info] at ({\base + (\i-1)*\dx}, \yshift) {$X_{\t}$};
    }

    \foreach \i in {1} {
        \pgfmathtruncatemacro{\p}{\ell*\groups + (\j-4)*\ell + \i}
        \node[localpar] at ({\base + (\r+\i-1)*\dx}, \yshift) {$L_{\p}$};
    }

    \node at ({\base + 1.5*\dx}, \yshift -1) {Local group-$\j$};
}

\foreach \i in {4,...,6} {
    \node[globalpar] at ({\totalwidth + (\i-4)*\dx}, \yshift) {$G^I_{\i}$};
}

\node[align=right] at (-2, 3*\yshift - 0.55 ) {\textbf{Final codeword}};

\foreach \j in {1,...,3} {

    \pgfmathsetmacro{\base}{(\j-1)*(\r*\dx + \ell*\dx + \gap)}

    \foreach \i in {1,...,3} {
        \pgfmathtruncatemacro{\t}{(\j-1)*\r + \i}
        \node[info] at ({\base + (\i-1)*\dx},3*\yshift) {$X_{\t}$};
    }

    \foreach \i in {1} {
        \pgfmathtruncatemacro{\p}{(\j-1)*\ell + \i}
        \node[localpar] at ({\base + (\r+\i-1)*\dx},3*\yshift) {$L_{\p}$};
    }
}

\foreach \j in {4,...,6} {

    \pgfmathsetmacro{\base}{(\j-4)*(\r*\dx + \ell*\dx + \gap)}

    \foreach \i in {1,...,3} {
        \pgfmathtruncatemacro{\t}{9 + (\j-4)*\r + \i}
        \node[info] at ({\base + (\i-1)*\dx}, 3*\yshift - 1.1) {$X_{\t}$};
    }

    \foreach \i in {1} {
        \pgfmathtruncatemacro{\p}{\ell*\groups + (\j-4)*\ell + \i}
        \node[localpar] at ({\base + (\r+\i-1)*\dx}, 3*\yshift - 1.1) {$L_{\p}$};
    }

}

\foreach \i in {1,...,4} {
    \node[globalpar] at ({\totalwidth + (\i-1)*\dx}, 3*\yshift - 0.55) {$G^F_{\i}$};
}

\node at ({\totalwidth + 1.5*\dx},3*\yshift - 1.3) {Final Global Parities };

\draw[thick, dashed, rounded corners]
(-4.2, 0.8) rectangle ({\totalwidth + 3*\dx + 0.5}, \yshift - 1.3);

\draw[thick, dashed, rounded corners]
(-3.5, 3*\yshift + 0.8) rectangle ({\totalwidth + 3.5*\dx + 0.2}, 3*\yshift - 1.8);

\draw[->, thick]
({\totalwidth/2}, 1.6*\yshift )
--
({\totalwidth/2}, 2.6*\yshift)
node[midway, right] {};

\end{tikzpicture}

\caption{Stable optimal-distance $(\Ik=9,\Ig=3,r=3,\ell=1;\Fk=18,\Fg=4,r=3,\ell=1)$-LRCC. We adopt a global indexing scheme for both the initial and final codes. Moreover, since the local groups remain unchanged, their labels are omitted in the final code for brevity.}
\label{fig:lrc_global_merge_stable}
\end{figure}

Next, we introduce additional notation, and outline the key components and assumptions. Let $t \in [\Il]$ be the index of an arbitrary initial codeword. For ease of exposition, we adopt global indexing for each of the following: initial codewords, information nodes, local parity nodes, local groups, initial global parity nodes and final global parity nodes (see Figure~\ref{fig:lrc_global_merge_stable}).

\noindent\textit{Information Nodes and Local Parity Nodes:} 
\begin{enumerate}[(i)]
    \item Let $j \in [\Ik]^t$ be the index of an arbitrary information node of the $t$-th initial codeword, and let $X_{j}$ denote the random variable corresponding to the data stored in the $j$-th information node. Recall that $[\Ik]^t = \{(t-1)\Ik + 1, \ldots, t\Ik\}$. We assume that the random variables $X_{[\Il\Ik]}$ are independent and uniformly distributed.
    \item Let $\tau \in [\Imm]^t$ be the index of an arbitrary local group of the $t$-th initial codeword, let $a \in [\ell]^{\tau}$ be the index of an arbitrary local parity node belonging to the $\tau$-th local group, and let $L_a$ denote the random variable corresponding to the data stored in the $a$-th local parity node. It follows that
    \begin{equation}
        \condH{L_{[\ell]^{\tau}}} 
    {X_{[r]^{\tau}}} = 0\;. 
    \end{equation}
    Moreover, each local group is independent of the others, so
    \begin{equation}
        \Iinfo{X_{[\Il\Ik] \setminus [r]^\tau}}{L_{[\ell]^\tau}} = 0\;.
    \end{equation}
    \item Let $\mathcal{A} \subseteq [\ell]^\tau$ and $\mathcal{D} \subseteq [r]^\tau$ be arbitrary subsets of indices of local parity nodes and information nodes, respectively, of the $\tau$-th local group such that $|\mathcal{A}| + |\mathcal{D}| = r$. From Definition~\ref{def:locality}, it follows that 
    \begin{equation}
        \condH{X_{[r]^\tau}}{L_{\mathcal{A}}, X_{\mathcal{D}}}= 0\;.
    \end{equation}
\end{enumerate}

\noindent\textit{Initial Global Parity Nodes:}
\begin{enumerate}[(i)]
    \item Let $i \in [\Ig]^t$ be the index of an arbitrary global parity node of the $t$-th initial codeword, and let $G_i^I$ denote the random variable corresponding to the data stored in  the $i$-th initial global parity node. It follows that
    \begin{equation}
        \condH{G_{[\Ig]^t}^I}{X_{[\Ik]^t}} = 0\;.
    \end{equation}
    Additionally, each initial codeword is independent of the others, so
    \begin{equation}
        \Iinfo{X_{[\Il\Ik] \setminus [\Ik]^t}}{G^I_{[\Ig]^t}} = 0\;.
    \end{equation}
\end{enumerate}

\noindent\textit{Final Global Parity Nodes:}
\begin{enumerate}[(i)]
    \item Let $i \in [\Fg]$ be the index of an arbitrary final global parity node, and let $G_i^F$ denote the random variable corresponding to the data stored in the $i$-th final global parity node. It follows that
    \begin{equation}
        \condH{G_{[\Fg]}^F}{X_{[\Fk]}} = 0\;.
    \end{equation}
\end{enumerate}

Lastly, we assume that all nodes have a storage capacity of $\alpha$ symbols over the finite field $\mathbb{F}_q$.

\subsection{Conversion Coordinator and Bandwidth Analysis}

Recall that for stable codes, it suffices to focus solely on the read bandwidth cost, since the write bandwidth cost is fixed.

To model the conversion process, we introduce the following random variables (all globally indexed):
\begin{enumerate}[(i)]
    \item $V_j$: the data downloaded from the $j$-th information node, where $j \in [\Il\Ik]$.
    \item $U_i$: the data downloaded from the $i$-th initial global parity node, where $i \in [\Il\Ig]$.
    \item $W_a$: the data downloaded from the $a$-th local parity node, where $a \in [\Il\Imm\ell]$.
\end{enumerate}
The following properties hold for the random variables involved in the conversion procedure:
\begin{enumerate}[(i)]
    \item Each random variable corresponding to the downloaded data is a deterministic function of its stored data. Thus,
    \begin{equation}
        \condH{V_j}{X_j} = \condH{U_i}{G^I_i} = \condH{W_a}{L_a} = 0\;.
    \end{equation}
    \item Let the number of symbols downloaded from the $j$-th information node, $i$-th initial global parity node  and $a$-th local parity node be denoted by $\beta_j, \sigma_i$ and $\delta_a$, respectively. It follows that
    \begin{equation}\label{eq:downloaded_data_bounds}
    \Hent{V_j}\leq\beta_j\leq\alpha,\quad\Hent{U_i}\leq\sigma_i\leq\alpha,\quad\Hent{W_a}\leq\delta_a\leq\alpha\;.
    \end{equation}
 \item \textbf{Conversion Coordinator Property:} The final parity nodes are deterministically generated by the conversion coordinator, hence it follows that 
 \begin{align}\label{eq:conv_coord_prop}
 \condH{G_{[\Fg]}^F}{V_{[\Il\Ik]}, U_{[\Il\Ig]}, W_{[\Il\Imm\ell]}} \;=\; 0\;.
 \end{align}
 
\end{enumerate}

\noindent Note that the read bandwidth cost, denoted by $\bw$, is the total number of symbols downloaded by the conversion coordinator during the conversion process. In other words, the read bandwidth cost of any $(\Ik, \Ig, r, \ell; \Fk = \Il \Ik, \Fg, r, \ell)$-LRCC is defined as
\begin{equation}\label{eq:read_bandwidth_definition}
    \bw\;:=\; \sum_{j=1}^{\Il\Ik} \beta_j + \sum_{i=1}^{\Il\Ig} \sigma_i + \sum_{a = 1}^{\Il\Imm\ell}\delta_a\;.
\end{equation}

We first derive a constraint that the entropy of the data downloaded from each of the information nodes, initial global parity nodes and local parity nodes must satisfy. In turn, this constraint will be utilized to derive a lower bound on the read bandwidth cost of {stable optimal-distance} LRCCs.

\begin{lemma}\label{lem:bandwidth_constraint}
    For any stable optimal-distance $(\Ik, \Ig, r, \ell; \Fk = \Il \Ik, \Fg, r, \ell)$-LRCC, it holds that
    \begin{align*}
        &\sum_{t\in[\Il]} \Hent{U_{[\Ig]^t}}+\frac{\min\{\Fg,r\}+\ell}{\Imm(r+\ell)}\left[\sum_{j\in[\Il\Ik]}\Hent{V_j} +\sum_{a\in[\Il\Imm\ell]} \Hent{W_a}\right]\;\geq\; \Il\min\{\Fg,r\}\alpha\;.
    \end{align*}
\end{lemma}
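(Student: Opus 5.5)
The plan is to prove the inequality separately for each initial codeword $t\in[\Il]$ and then sum over $t$. Write $g':=\min\{\Fg,r\}$ and fix $\cA:=[g']\subseteq[\Fg]$, a set of $g'$ final global parity nodes. Fix also an initial codeword $t$ and a local group $\tau\in[\Imm]^t$. Let $\cN_\tau$ denote the $r+\ell$ nodes of group $\tau$, namely the information nodes $[r]^\tau$ and the local parities $[\ell]^\tau$. For a node $y\in\cN_\tau$, let $Y_y$ be its stored content and $D_y$ its downloaded content ($V_j$ or $W_a$). Let $X_{\mathrm{out}}:=X_{[\Il\Ik]\setminus[r]^\tau}$.

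\textbf{Step 1 (entropy the group must supply).} Take any $\cS\subseteq\cN_\tau$ with $|\cS|=r-g'$. Apply Proposition~\ref{prop:independence} to the final optimal-distance $(\Il\Ik,\Fg,r,\ell)$-LRC. Since $|\cA|+|\cS|=r$, it gives $\condH{G^F_\cA,Y_\cS}{X_{\mathrm{out}}}=r\alpha$ and $\condH{Y_\cS}{X_{\mathrm{out}}}=(r-g')\alpha$. Subtracting,
\[
\condH{G^F_\cA}{X_{\mathrm{out}},Y_\cS}=g'\alpha .
\]

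\textbf{Step 2 (upper-bound by the downloads).} By the conversion coordinator property \eqref{eq:conv_coord_prop}, $G^F_\cA$ is a function of all downloaded variables. Given $(X_{\mathrm{out}},Y_\cS)$, many of these are already determined:
\begin{enumerate}[(i)]
\item $V_j$ for every $j\notin[r]^\tau$;
\item $W_a$ for every local parity outside group $\tau$, since it is a function of its own group's information nodes;
\item $U_{[\Ig]^{t'}}$ for every $t'\neq t$, since it is a function of codeword $t'$'s information nodes only;
\item $D_y$ for every $y\in\cS$.
\end{enumerate}
Hence
\[
g'\alpha\;\le\;\condH{U_{[\Ig]^t},D_{\cN_\tau\setminus\cS}}{X_{\mathrm{out}},Y_\cS}\;\le\;\Hent{U_{[\Ig]^t}}+\Hent{D_{\cN_\tau\setminus\cS}},
\]
using that conditioning reduces entropy. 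The set $\cN_\tau\setminus\cS$ has size $g'+\ell$.

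\textbf{Step 3 (averaging).} Choose $\cS$ so that its complement minimizes $\Hent{D_{\cN_\tau\setminus\cS}}$ over all $(g'+\ell)$-subsets of $\cN_\tau$. The first form of Lemma~\ref{lem:min_avg_entropy_bound}, with $a=g'+\ell$ and $b=r+\ell$, gives
\[
g'\alpha\le\Hent{U_{[\Ig]^t}}+\frac{g'+\ell}{r+\ell}\sum_{y\in\cN_\tau}\Hent{D_y}.
\]
This holds for every $\tau\in[\Imm]^t$. Take the $\tau$ minimizing the right-hand side, which is at most the average over the $\Imm$ groups of codeword $t$. The sum over $y\in\cN_\tau$ then becomes $\frac{1}{\Imm}$ times the sum of $\Hent{V_j}$ and $\Hent{W_a}$ over codeword $t$. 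Summing over $t\in[\Il]$ yields exactly the claimed inequality.

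\textbf{Main obstacle.} The delicate point is Step 2: we must verify that every download other than $U_{[\Ig]^t}$ and $D_{\cN_\tau\setminus\cS}$ is indeed a function of the conditioning variables. This relies on local parities depending only on their own group and on initial global parities depending only on their own codeword. The other essential ingredient is that $g'\le r$, which is exactly what is needed for Proposition~\ref{prop:independence} to apply in Step 1.
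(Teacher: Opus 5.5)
Your proof is correct and follows essentially the paper's route: for each codeword $t$ and local group $\tau$, you combine the conversion coordinator property, Proposition~\ref{prop:independence} applied to the final code, and the first form of Lemma~\ref{lem:min_avg_entropy_bound}, then average over $\tau$ and sum over $t$. The one difference is small: the paper writes $g'\alpha$ minus a conditional mutual information and bounds that term with Lemma~\ref{lem:conditional_mutual_information_bound}, whereas you condition directly on the stored contents of the $r-g'$ complementary group nodes, which inlines that lemma's argument (here $g'=\min\{\Fg,r\}$).
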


\begin{IEEEproof}
 Let $t\in[\Il]$ be the index of an arbitrary initial codeword and $\tau \in [\Imm]^t$ be the index of an arbitrary local group of the $t$-th initial codeword. Then,
\begin{align*}
   0 &= \condH{G_{[\Fg]}^F}{V_{[\Il\Ik]}, U_{[\Il\Ig]}, W_{[\Il\Imm\ell]}}\tag{Equation~\eqref{eq:conv_coord_prop}: Conversion Coordinator Property}\\
   &\geq \condH{G_{[\Fg]}^F}{V_{[\Il\Ik]}, U_{[\Il\Ig]\setminus[\Ig]^t}, W_{[\Il\Imm\ell]}} - \Hent{U_{[\Ig]^t}}\\
    &= \condH{G_{[\Fg]}^F}{V_{[r]^\tau}, W_{[\ell]^\tau}, V_{[\Il\Ik]\setminus[r]^\tau}, U_{[\Il\Ig]\setminus[\Ig]^t}, W_{[\Il\Imm\ell]\setminus[\ell]^\tau}} - \Hent{U_{[\Ig]^t}}\\
    &\geq \condH{G_{[\Fg]}^F}{V_{[r]^\tau}, W_{[\ell]^\tau}, X_{[\Il\Ik]\setminus[r]^\tau}} - \Hent{U_{[\Ig]^t}}\tag{Data Processing Inequality}\\
    &\geq \condH{G_{[\min\{\Fg,r\}]}^F}{V_{[r]^\tau}, W_{[\ell]^\tau}, X_{[\Il\Ik]\setminus[r]^\tau}} - \Hent{U_{[\Ig]^t}}\\
    &= \condH{G_{[\min\{\Fg,r\}]}^F}{X_{[\Il\Ik]\setminus[r]^\tau}}  - \condI{G_{[\min\{\Fg,r\}]}^F}{V_{[r]^\tau}, W_{[\ell]^\tau}}{X_{[\Il\Ik]\setminus[r]^\tau}}- \Hent{U_{[\Ig]^t}}\\
    &= \min\{\Fg,r\}\alpha- \condI{G_{[\min\{\Fg,r\}]}^F}{V_{[r]^\tau}, W_{[\ell]^\tau}}{X_{[\Il\Ik]\setminus[r]^\tau}} - \Hent{U_{[\Ig]^t}}\;.\tag{Proposition~\ref{prop:independence}}
    \end{align*}
Let $\cA$ denote the set of all subsets of $V_{[r]^\tau} \cup W_{[\ell]^\tau}$ of size $\min\{\Fg,r\} + \ell$. For any subset of random variables $\cZ \in \cA$, from Proposition~\ref{prop:independence}, $G_{[\min\{\Fg,r\}]}^F$ and $\left(V_{[r]^\tau} \cup W_{[\ell]^\tau}\right) \setminus \cZ$ are conditionally independent with respect to $X_{[\Il\Ik]\setminus[r]^\tau}$. Then, it follows that
\begin{align*}
    0 &\geq \min\{\Fg,r\}\alpha- \min_{Z \in \cA}\left[\condH{Z}{X_{[\Il\Ik]\setminus[r]^\tau}} \right]- \Hent{U_{[\Ig]^t}}\tag{Lemma~\ref{lem:conditional_mutual_information_bound}}\\
    &= \min\{\Fg,r\}\alpha- \min_{Z \in \cA}\left[\Hent{Z} \right]- \Hent{U_{[\Ig]^t}}\\
    &\geq \min\{\Fg,r\}\alpha
    - \frac{\min\{\Fg,r\} + \ell}{r + \ell}\left[\sum_{j \in [r]^\tau}\Hent{V_j} + \sum_{a \in [\ell]^\tau}\Hent{W_a}\right]- \Hent{U_{[\Ig]^t}}\;.\tag{Lemma~\ref{lem:min_avg_entropy_bound}}
\end{align*}
Upon summing over $\tau \in [\Imm]^t$ and $t \in [\Il]$ and simple algebraic manipulation, we obtain the desired bound.
\end{IEEEproof}

Finally, we use the previous constraint to derive a lower bound on the read bandwidth cost of {stable optimal-distance} LRCCs in the following theorem.

\begin{theorem}\label{thm:optimalLRC}
The read bandwidth cost of any {stable optimal-distance} $(\Ik, \Ig, r, \ell; \Fk = \Il \Ik, \Fg, r, \ell)$-LRCC satisfies: 
\begin{equation*}
    \bw \geq \begin{cases}
         \Il r\alpha & \min\{\Ig, \Fg\} > r\;,\\
        \Il\Fg\alpha & \Ig \geq \Fg,\ r \ge \Fg\;,\\
        \Il\Ig\alpha + \Il\Imm(\Fg-\Ig)\left(\frac{r+\ell}{\Fg+\ell}\right)\alpha&\Ig < \Fg \leq r\;,\\
        \Il\Ig\alpha + \Il\Imm(r-\Ig)\alpha & \text{~otherwise}\;.
    \end{cases}
\end{equation*}
Moreover, this bound is tight when $\Fg \leq r$.
\end{theorem}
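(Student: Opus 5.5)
The plan is to treat the inequality of Lemma~\ref{lem:bandwidth_constraint} as the only constraint of a small linear program and to minimize $\bw$ over it. Throughout, write $m:=\min\{\Fg,r\}$ and $c:=\frac{m+\ell}{\Imm(r+\ell)}$.

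\textbf{Step 1 (reduce to a two-variable LP).} Introduce the aggregated quantities
\[
u:=\sum_{t\in[\Il]}\Hent{U_{[\Ig]^t}}, \qquad s:=\sum_{j\in[\Il\Ik]}\Hent{V_j}+\sum_{a\in[\Il\Imm\ell]}\Hent{W_a}.
\]
\begin{itemize}
\item By sub-additivity and \eqref{eq:downloaded_data_bounds}, $\Hent{U_{[\Ig]^t}}\le\sum_{i\in[\Ig]^t}\sigma_i\le\Ig\alpha$. Summing over $t$ gives $u\le\sum_i\sigma_i$ and $u\le\Il\Ig\alpha$.
\item Likewise, \eqref{eq:downloaded_data_bounds} gives $s\le\sum_j\beta_j+\sum_a\delta_a$.
\item Hence, by \eqref{eq:read_bandwidth_definition}, $\bw\ge u+s$.
\item Lemma~\ref{lem:bandwidth_constraint} reads $u+c\,s\ge\Il m\alpha$, with $u,s\ge 0$.
\end{itemize}
So it suffices to lower bound $u+s$ subject to these constraints.

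\textbf{Step 2 (solve the LP).} Since $m\le r$ and $\Imm\ge1$, we have $c\le 1$. A unit of $u$ therefore buys at least as much of the constraint as a unit of $s$, and the optimum saturates $u$ first. Formally, $u+s\ge u+\frac{\Il m\alpha-u}{c}$. This expression is non-increasing in $u$ (its derivative is $1-1/c\le0$), so it is minimized at the largest admissible $u$, namely $u=\min\{\Il\Ig,\Il m\}\alpha$.
\begin{itemize}
\item If $m\le\Ig$, then $s=0$ and the bound is $\Il m\alpha$. This gives $\Il r\alpha$ when $\min\{\Ig,\Fg\}>r$, and $\Il\Fg\alpha$ when $\Fg\le\Ig$ and $\Fg\le r$.
\item If $\Ig<m$, the bound is
\[
\Il\Ig\alpha+\Il\Imm(m-\Ig)\frac{r+\ell}{m+\ell}\,\alpha.
\]
For $m=\Fg$ (that is, $\Ig<\Fg\le r$) this is the third case. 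For $m=r<\Fg$ with $\Ig<r$, the fraction equals $1$ and we obtain the fourth case.
\end{itemize}
I would double-check that the four cases of the statement exactly partition these subcases; in particular, "otherwise" is precisely $\Ig<r<\Fg$.

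\textbf{Step 3 (tightness for $\Fg\le r$).} Compare with the read bandwidth of the constructions of~\cite{maturana2023LRCC} displayed in Section~\ref{sec:background}.
\begin{itemize}
\item When $\Fg\le\Ig$, their cost is $\Il\Fg\alpha$, which matches case two.
\item When $\Ig<\Fg\le r$, use $\Ik+\Imm\ell=\Imm(r+\ell)$. Their cost becomes $\Il\Ig\alpha+\Il\Imm(r+\ell)\frac{\Fg-\Ig}{\Fg+\ell}\alpha$, which matches case three.
\end{itemize}

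The main obstacle is not algebraic. It is making sure that the relaxation from download sizes to entropies in Step 1 is sound, namely that the bound $u\le\Il\Ig\alpha$ is the only cap needed. Everything else is a routine monotonicity argument in the single variable $u$.
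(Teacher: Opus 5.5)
Your proposal is correct and is essentially the paper's argument. The paper likewise relaxes $\bw$ to $\sum_t\Hent{U_{[\Ig]^t}}$ plus the $V$/$W$ entropies, and combines Lemma~\ref{lem:bandwidth_constraint} with $\sum_t\Hent{U_{[\Ig]^t}}\le\Il\Ig\alpha$ and $c\le 1$. The only difference is presentation: where you minimize over the single variable $u$, the paper writes two inequalities and merges them, namely $\bw\ge\frac{1}{c}\Il m\alpha-(\frac{1}{c}-1)\Il\Ig\alpha$ and $\bw\ge u+cs\ge\Il m\alpha$; the second of these plays the role of your $s\ge0$ cap at $u=\Il m\alpha$. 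One cosmetic slip: ``otherwise'' is $\Ig\le r<\Fg$ rather than $\Ig<r<\Fg$, but at $\Ig=r$ both applicable formulas equal $\Il r\alpha$, so nothing changes.
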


\begin{IEEEproof}
    Observe that 
    \begin{align*}
        \bw\;=\;&\sum_{j\in[\Il\Ik]} \beta_j + \sum_{i\in[\Il\Ig]} \sigma_i + \sum_{a\in[\Il\Imm\ell]}\delta_a\tag{Equation~\eqref{eq:read_bandwidth_definition}}\\
        \;\geq\;&\sum_{j\in[\Il\Ik]} \Hent{V_j} + \sum_{t\in[\Il]} \Hent{U_{[\Ig]^t}} + \sum_{a\in[\Il\Imm\ell]}\Hent{W_a}\tag{Equation~\eqref{eq:downloaded_data_bounds}}\\
        \;=\;&\frac{\Imm(r+\ell)}{\min\{\Fg,r\}+\ell}\sum_{t\in[\Il]} \Hent{U_{[\Ig]^t}}- \frac{\Imm(r+\ell)}{\min\{\Fg,r\}+\ell}\sum_{t\in[\Il]} \Hent{U_{[\Ig]^t}} + \sum_{t\in[\Il]} \Hent{U_{[\Ig]^t}}\\
        &\quad\; + \sum_{j\in[\Il\Ik]}\Hent{V_j} + \sum_{a\in[\Il\Imm\ell]} \Hent{W_a}\\
        \;=\;&\frac{\Imm(r+\ell)}{\min\{\Fg,r\}+\ell}\sum_{t\in[\Il]} \Hent{U_{[\Ig]^t}}- \frac{\Imm(r+\ell)}{\min\{\Fg,r\}+\ell}\sum_{t\in[\Il]} \Hent{U_{[\Ig]^t}} + \sum_{t\in[\Il]} \Hent{U_{[\Ig]^t}}\\
        &\quad\; + \frac{\Imm(r+\ell)}{\min\{\Fg,r\}+\ell}\left[ \frac{\min\{\Fg,r\}+\ell}{\Imm(r+\ell)}\left[\sum_{j\in[\Il\Ik]}\Hent{V_j} + \sum_{a\in[\Il\Imm\ell]} \Hent{W_a} \right]\right]\\
        \;=\;&\frac{\Imm(r+\ell)}{\min\{\Fg,r\}+\ell}\left[\sum_{t\in[\Il]} \Hent{U_{[\Ig]^t}} + \frac{\min\{\Fg,r\}+\ell}{\Imm(r+\ell)}\left[\sum_{j\in[\Il\Ik]}\Hent{V_j} + \sum_{a\in[\Il\Imm\ell]} \Hent{W_a} \right]\right]\\
        &\quad\;- \left[\frac{\Imm(r+\ell)}{\min\{\Fg,r\}+\ell}-1\right]\sum_{t\in[\Il]} \Hent{U_{[\Ig]^t}}\\
        \;\geq\;&\left[\frac{\Imm(r+\ell)}{\min\{\Fg,r\}+\ell}\right]\Il\min\{\Fg,r\}\alpha- \left[\frac{\Imm(r+\ell)}{\min\{\Fg,r\}+\ell}-1\right]\Il\Ig\alpha\tag{Lemma~\ref{lem:bandwidth_constraint}, Equation~\eqref{eq:downloaded_data_bounds}}\\
        \;=\;&\Il\min\{\Fg,r\}\alpha + \left[\frac{\Imm(r+\ell)}{(\min\{\Fg,r\}+\ell)}-1\right][\Il\min\{\Fg,r\}\alpha - \Il\Ig\alpha]\;.
    \end{align*}
    Note also that
    \begin{align*}
        \bw\;=\;&\sum_{j\in[\Il\Ik]} \beta_j + \sum_{i\in[\Il\Ig]} \sigma_i + \sum_{a\in[\Il\Imm\ell]}\delta_a\tag{Equation~\eqref{eq:read_bandwidth_definition}}\\
        \;\geq\;&\sum_{j\in[\Il\Ik]} \Hent{V_j} + \sum_{t\in[\Il]} \Hent{U_{[\Ig]^t}} + \sum_{a\in[\Il\Imm\ell]}\Hent{W_a}\tag{Equation~\eqref{eq:downloaded_data_bounds}}\\
        \;\geq\;&\sum_{t\in[\Il]} \Hent{U_{[\Ig]^t}} + \frac{\min\{\Fg,r\}+\ell}{\Imm(r+\ell)}\left[\sum_{j\in[\Il\Ik]}\Hent{V_j} + \sum_{a\in[\Il\Imm\ell]} \Hent{W_a} \right]\\
        \;\geq\;&\Il\min\{\Fg,r\}\alpha\;.\tag{Lemma~\ref{lem:bandwidth_constraint}}
    \end{align*}
    We can combine these two bounds as follows:
    $$\bw\;\geq\;\Il\min\{\Fg,r\}\alpha + \left[\frac{\Imm(r+\ell)}{(\min\{\Fg,r\}+\ell)}-1\right][\Il\min\{\Fg,r\}\alpha - \Il\min\{\Ig,\Fg,r\}\alpha]\;.$$
    Equivalently, as desired,
    \begin{align*}
    \bw \geq \begin{cases}
         \Il r\alpha & \min\{\Ig, \Fg\} > r\;,\\
        \Il\Fg\alpha & \Ig \geq \Fg, r \ge \Fg\;,\\
        \Il\Ig\alpha + \Il\Imm(\Fg-\Ig)\left(\frac{r+\ell}{\Fg+\ell}\right)\alpha&\Ig < \Fg \leq r\;,\\
        \Il\Ig\alpha + \Il\Imm(r-\Ig)\alpha & \text{~otherwise}\;.
    \end{cases}
    \end{align*}
    Moreover, for the regime $\Fg \leq r$, this bound is tight as it matches the read bandwidth cost of the stable optimal-distance LRCC constructions presented in~\cite{maturana2023LRCC}.
\end{IEEEproof}

\section{Conclusion}\label{sec:conclusion}
In this work, we studied fundamental limits on the bandwidth cost of stable optimal-distance LRCCs in the global merge regime.  We established, to our knowledge, the first non-trivial lower bounds on bandwidth cost in this regime. We did so by generalizing the information-theoretic approach of~\cite{singhvi2025tightlowerboundsbandwidth} to the setting of LRCs. Our bounds are tight for the regime $\Fg \le r$ as they match the bandwidth cost of constructions presented in~\cite{maturana2023LRCC}.

\paragraph*{Open Problems and Future Work}
A natural extension of this work would be to  improve on either the derived bounds or the existing constructions presented in~\cite{maturana2023LRCC} for the regime $\Fg > r$. A second future avenue of work would be to utilize this information-theoretic approach to derive bounds on the bandwidth cost of LRCCs in the global split regime and determine whether the constructions presented in~\cite{maturana2023LRCC} are optimal. Another problem to investigate is how removing the restrictions of stability and disjoint local groups, or adding the restrictions of maximum reliability and availability, affects the bounds on bandwidth cost. A broader direction to approach would be to consider conversion between LRCs while changing locality parameters and develop the corresponding information-theoretic approach.
Another promising direction is to generalize the information-theoretic framerwork introduced in \cite{singhvi2025tightlowerboundsbandwidth} to settings beyond MDS codes and LRCs, including Reed–Muller codes and regenerating codes, and possibly to more general classes of codes. 
\newpage
\bibliographystyle{IEEEtran} 
\bibliography{ref}

\end{document}